  \def\bool{\{0,1\}}
\def\A{{\algo A}}
\def\B{\mathcal{B}}
\def\D{\mathcal{D}}
\def\xor{\oplus}
\newcommand{\subversion}[1]{}
  \theoremstyle{plain}
  \newtheorem{theorem}{Theorem}[section]
  \newtheorem{lemma}[theorem]{Lemma}
  \newtheorem{corollary}[theorem]{Corollary}
  \newtheorem{definition}[theorem]{Definition}
\newtheorem*{rep@theorem}{\rep@title}
\newcommand{\newreptheorem}[2]{%
\newenvironment{rep#1}[1]{%
 \def\rep@title{#2 \ref{##1}}%
 \begin{rep@theorem}}%
 {\end{rep@theorem}}}
\newmdtheoremenv[backgroundcolor=gray!10,
                 linewidth=0pt,
                 innerleftmargin=4pt,
                 innerrightmargin=4pt,
                 innertopmargin=2pt,
                 innerbottommargin=4pt,
            splitbottomskip=4pt]{protocol}[prot]{Protocol}
\newcommand{\ignore}[1]{}
\newcommand{\N}{\mathbb{N}}
\newcommand{\negl}{\mathsf{negl}}
\newcommand\id{\mathbb{I}}
\definecolor{webgreen}{rgb}{0,.5,0}
\definecolor{webblue}{rgb}{0,0,.5}
\newcommand\algo{\mathcal}
\newcommand{\Enc}{\ensuremath{\mathsf{Enc}}\xspace}
\newcommand{\Dec}{\ensuremath{\mathsf{Dec}}\xspace}
\newcommand{\Gen}{\ensuremath{\mathsf{Gen}}\xspace}
\newcommand{\expref}[2]{\texorpdfstring{\hyperref[#2]{#1~\ref{#2}}}{#1~\ref{#2}}}
\newcommand{\bits}{\{0,1\}}
\newcommand{\bit}{\{0,1\}}
\newcommand{\from}{\leftarrow}
\newcommand{\QRACVL}{\ensuremath{\mathsf{QRAC\mbox{-}VL}}\xspace}
\newcommand{\swap}{{\sf swap}}
\newcommand{\Perms}{\mathcal{S}}
\newif\ifsubmission
\newcommand{\ga}[1]{}
\newcommand{\ap}[1]{}
\newcommand{\ks}[1]{}
\newcommand{\chen}[1]{}
\definecolor{mgreen}{rgb}{.1,.7,0}
\newcommand{\ga}[1]{{\noindent \textcolor{purple}{\emph{(GA:  #1)}}}{}}
\newcommand{\ap}[1]{{\noindent \textcolor{blue}{\emph{(AP:  #1)}}}{}}
\newcommand{\ks}[1]{{\noindent \textcolor{mgreen}{\emph{(KS:  #1)}}}{}}
\newcommand{\chen}[1]{{\noindent \textcolor{orange}{\emph{(Chen:  #1)}}}{}}
\newcommand{\spi}{\textsf{SPI}\xspace}
\newcommand{\aspi}{\textsf{aSPI}\xspace}
\newcommand{\dpi}{\textsf{DPI}\xspace}
\newcommand{\adpi}{\textsf{aDPI}\xspace}
\newcommand{\poly}{\mathrm{poly}}
\newcommand{\Si}{\mathsf{S}\xspace}
\newcommand{\Di}{\mathsf{D}\xspace}
\newcommand{\aS}{\mathsf{aS}\xspace}
\newcommand{\aD}{\mathsf{aD}\xspace}
\newcommand{\Opi}{\mathcal{O}_\pi}
\newcommand{\Opii}{\mathcal{O}_{\pi^{-1}_{\bot y}}}
\newcommand{\Opiii}{\pi_{\bot y}}
\newcommand{\DecisionInvert}{\ensuremath{\mathsf{DecisionInvert}}\xspace}
\title{On the Two-sided Permutation Inversion Problem}
\author[1]{Gorjan Alagic}
\author[2]{Chen Bai}
\author[3]{Alexander Poremba}
\author[4]{Kaiyan Shi}
\affil[1]{QuICS, University of Maryland, and NIST}
\affil[2]{Dept.\ of Electrical and Computer Engineering, University of Maryland}
\affil[3]{Computing and Mathematical Sciences, California Institute of Technology}
\affil[4]{Dept.\ of Computer Science, University of Maryland}
\begin{document}
\date{\vspace{-5ex}}

\maketitle

\begin{abstract}
In the permutation inversion problem, the task is to find the preimage of some challenge value, given oracle access to the permutation. This is a fundamental problem in query complexity, and appears in many contexts, particularly cryptography. In this work, we examine the setting in which the oracle allows for quantum queries to both the forward and the inverse direction of the permutation---except that the challenge value cannot be submitted to the latter. Within that setting, we consider two options for the inversion algorithm: whether it can get quantum advice about the permutation, and whether it must produce the entire preimage (search) or only the first bit (decision). We prove several theorems connecting the hardness of the resulting variations of the inversion problem, and establish a number of lower bounds. Our results indicate that, perhaps surprisingly, the inversion problem does not become significantly easier when the adversary is granted oracle access to the inverse, provided it cannot query the challenge itself.
\end{abstract}

\section{Introduction} 

\subsection{The permutation inversion problem} 

The permutation inversion problem is defined as follows: given a permutation $\pi: [N]\rightarrow [N]$ and an image $y \in [N]$, output the correct preimage $x := \pi^{-1}(y)$. In the decision version of the problem, it is sufficient to output only the first bit of $x$. If the algorithm can only access $\pi$ by making classical queries, then making $T = \Omega(N)$ queries is necessary and sufficient for both problems. If quantum queries are allowed, then Grover's algorithm can be used to solve both problems with $T = O(\sqrt{N})$ queries~\cite{grover1996fast,ambainis2002quantum}, which is worst-case asymptotically optimal~\cite{bennett1997strengths,ambainis2002quantum,nayak2010inverting}. 

In this work, we consider the permutation inversion problem in a setting where the algorithm is granted both forward and inverse quantum query access to the permutation $\pi$. In order to make the problem nontrivial, we modify the inverse oracle so that it outputs a reject symbol when queried on the challenge image $y$. We call this the \emph{two-sided permutation inversion problem}. This variant appears naturally in the context of chosen-ciphertext security for encryption schemes based on (pseudorandom) permutations~\cite{katz2020introduction}, as well as in the context of sponge hashing (SHA3)~\cite{guido2011cryptographic}.
We consider several variants:
\begin{enumerate}
\item \emph{(Auxiliary information.)} With this option enabled, the inversion algorithm now consists of two phases. The first phase is given a full description of $\pi$ (e.g., as a table) and allowed to prepare an arbitrary quantum state $\rho_\pi$ consisting of $S$ qubits. This state is called \emph{auxiliary information} or \emph{advice}. The second phase of the inversion algorithm is granted only the state $\rho_\pi$ and query access to $\pi$, and asked to invert an image $y$. The two phases of the algorithm can also share an arbitrarily long uniformly random string, referred to as \emph{shared randomness}. The complexity of the algorithm is measured in terms of the number of qubits $S$ of the advice state (generated by the first phase) and the total number of queries $T$ (made during the second phase.) 
\item \emph{(Adaptive restriction of challenge distribution.)} In this case, the inversion algorithm again consists of two phases. The first phase is again given a full description of $\pi$, and allowed to output a string $\mu \in \{0,1\}^m$ for $m < n$, where $n = \sqrt{N}$. The second phase is then granted query access to $\pi$ and asked to invert an image $y$ which is sampled uniformly at random from the set of all strings whose last $m$ bits equal $\mu$.
\item \emph{(Search vs Decision.)} Here the two options simply determine whether the inversion algorithm is tasked with producing the entire preimage $x = \pi^{-1}(y)$ of the challenge $y$ (search version), or only the first bit $x_0$ (decision version.)
\end{enumerate}

If the algorithm is solving the search problem, we refer to it as a search permutation inverter, or \spi. If it is solving the decision problem, we refer to it as a decision permutation inverter, or \dpi. If an \spi uses $S$ qubits of advice and $T$ queries to succeed with probability at least $\epsilon$ in the search inversion experiment, we say it is a $(S, T, \epsilon)$-\spi. If a \dpi uses $S$ qubits of advice and $T$ queries to succeed with probability at least $1/2 +\delta$ in the decision inversion experiment, we say it is a $(S, T, \delta)$-\dpi. If the algorithm is allowed to adaptively restrict the challenge distribution, we say it is adaptive and denote it by $\aspi$ or $\adpi$, as appropriate.

In this work, we are mainly interested in the \emph{average-case} setting. This means that both the permutation $\pi$ and the challenge image $y$ are selected uniformly at random. Moreover, the success probability is taken over all the randomness in the inversion experiment, i.e., over the selection of $\pi$ and $y$ along with all internal randomness and measurements of the inversion algorithm.

In \Cref{sec:preliminaries}, we present technical preliminaries, including the swapping lemma and quantum random access codes (QRAC), for subsequent proof. In \Cref{sec:perminv}, we introduce several definitions of the permutation inversion problem,  with both auxiliary information and adaptive restriction of challenge distribution. Within \Cref{sec:amplification}, we show methods for amplifying the success probability of inversion in the non-adaptive case. Subsequently, in \Cref{sec:reductions}, we illustrate two reductions: from search-to-decision with auxiliary information and from unstructured search-to-decision without auxiliary information. These reductions are then utilized to derive lower bounds, as shown in \Cref{sec:lower bound}. Finally, in \Cref{sec:applications}, we propose a novel security notion, called one-way-QCCRA2, and establish the security of two common schemes under this notion, subject to specific conditions.

\subsection{Related work}

Previous works have considered the quantum-query \emph{function} inversion problem~\cite{hhan2019quantum,chung2019lower,chung2020tight,dunkelman2023quantum,liu2023non}. A number of papers gave lower bounds and time-space tradeoffs for the (one-sided) quantum-query permutation inversion problem, with and without advice~\cite{ambainis2002quantum,nayak2010inverting,rosmanis2021tight,nayebi2014quantum,hhan2019quantum,chung2019lower,fefferman2015quantum,belovs2023one}. The relevant highlights among these are summarized in \Cref{tab:permutation inversion work}. 

We remark that some of these previous works~\cite{cao2021being,chung2019lower,nayebi2014quantum} do not fully address the average-case setting. Specifically, they deal with inverters that are ``restricted'' in the following manner. First, the inverter is said to ``invert $y$ for $\pi$'' if it succeeds in the inversion experiment for the specific pair $(\pi, y)$ with probability at least $2/3$. Second, the inverter is said to ``invert a $\delta$-fraction of inputs'' if $\Pr_{\pi, y}[\text{the inverter inverts }y\text{ for }\pi] \geq \delta$. This type of inverter is clearly captured by our notion above: it is an $(S, T, 2\delta/3)$-\spi. However, there are successful inverters of interest that are captured by our definition but are not restricted. For example, in a cryptographic context, one would definitely be concerned about adversaries that can invert every $(\pi, y)$ with a probability of exactly $1/n$. Such an adversary is clearly a $(S, T, 1/n)$-\spi, but is not a restricted inverter for any value of $\delta$. Other works also consider the general average-case (e.g.,~\cite{chung2020tight,liu2023non,hhan2019quantum}) but without two-way oracle access. Note that the lower bound for restricted adversaries described in \cite{nayebi2014quantum,chung2019lower} can be translated to the more general lower bound in a black box way by applying our amplification procedure described in \Cref{lem:amplification}.

\begin{table}[h]
  \small
  \begin{tabularx}{\textwidth} { 
    | >{\raggedright\arraybackslash}X 
    | >{\centering\arraybackslash}X 
    | >{\centering\arraybackslash}X 
    | >{\centering\arraybackslash}X 
    | >{\centering\arraybackslash}X 
    |}
   \hline
   & \cite{nayebi2014quantum} & \cite{chung2019lower} & \cite{hhan2019quantum} & Ours\\
   \hline
    Advice & classical & quantum & quantum & quantum\\
    \hline
    Access Type & one-sided & one-sided & one-sided & two-sided\\
    \hline
    Inverter & restricted & restricted & general & general\\
    \hline
      $T$-$S$ trade-off & $ST^2=\widetilde{\Omega}(N)$
      & $ST^2=\widetilde{\Omega}(\epsilon N)$ &  $ST^2=\widetilde{\Omega}(\epsilon^3 N)$  &  $ST^2=\widetilde{\Omega}(\epsilon^3 N)$\rule[0ex]{0pt}{3ex}\\
  \hline
\end{tabularx} 
\caption{Summary of previous work on permutation inversion with advice. Success probability is denoted by $\epsilon$. Note that $\epsilon=O(1)$ in~\cite{nayebi2014quantum}.}\label{tab:permutation inversion work}
\end{table}

To our knowledge, the two-way variant of the inversion problem has only been considered in one other work. Specifically, \cite{cao2021being} gives a lower bound of $T = \Omega(N^{1/5})$ to invert a random injective function (with two-way access and no advice) with a non-negligible success probability. 

Another novelty of our work is that we give lower bounds and time-space tradeoffs for the decision problem (rather than just search). While prior work~\cite{chung2020tight} also considered the general decision game, their generic framework crucially relies on compressed oracles \cite{zhandry2019record} which are only known to support random \emph{functions}. Consequently, their techniques cannot readily be applied in the context of permutation inversion due to a lack of ``compressed permutation oracles''.

We remark that the notion of two-way quantum accessibility to a random permutation has been considered in other works; for example,~\cite{alagic2022post,alagic2022posttwk} studied the hardness of detecting certain modifications to the permutation in this model. By contrast, we are concerned with the problem of finding the inverse of a random image.

\section{Technical preliminaries} \label{sec:preliminaries}
\subsection{Swapping Lemma}

Let $\algo A^f$ be a quantum algorithm with quantum oracle access to a function $f: \algo X \rightarrow \algo Y$, for some finite sets $\algo X$ and $\algo Y$. Let $\algo S \subseteq \algo X$ be a subset. Then, the total query magnitude of $\algo A^f$ on the set $\algo S$ is defined as
$
q(\algo A^f,\algo S)  =  \sum_{t = 0}^{T-1} \|\Pi_{\algo S}\ket{\psi_t}\|^2,
$
where $\ket{\psi_t}$ represents the state of $\algo A$ just before the $(t+1)^{\textrm{st}}$ query and $\Pi_{\algo S}$ is the projector onto $\algo S$ acting on the query register of $\algo A$. We use the following simple fact: for any subset $\algo S \subseteq \algo X$ and $\algo A$ making at most $T$ queries, it holds that $q(\algo A^f,\algo S) \leq T$. The following lemma controls the ability of a query algorithm to distinguish two oracles, in terms of the total query magnitude to locations at which the oracles take differing values.

\begin{lemma}[Swapping Lemma, \cite{Vaz98}]\label{lem:swapping} 
Let $f,g: \algo X \rightarrow \algo Y$ be functions with $f(x) = g(x)$ for all $x \notin \algo S$, where $\algo S \subseteq \algo X$.
Let $\ket{\Psi_f}$ and $\ket{\Psi_g}$ denote the final states of a quantum algorithm $\algo A$ with quantum oracle access to the functions $f$ and $g$, respectively. Then, 
$$
\| \ket{\Psi_f} - \ket{\Psi_g} \| \leq \sqrt{T \cdot q(\algo A^f,\algo S)},
$$
where $\| \ket{\Psi_f} - \ket{\Psi_g} \|$ denotes the Euclidean distance and where $T$ is an upper bound on the number of quantum oracle queries made by $\algo A$.
\end{lemma}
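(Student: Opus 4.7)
The plan is a standard hybrid argument: interpolate from the $f$-execution to the $g$-execution by swapping the oracle one query at a time, bound the error introduced at each swap by the mass on $\algo S$ just before the corresponding query, and then combine the per-step bounds via Cauchy--Schwarz.

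First I would define a sequence of hybrid states $\ket{\Phi_0}, \ldots, \ket{\Phi_T}$, where $\ket{\Phi_k}$ is the state that $\algo A$ would produce if its first $k$ queries went to $f$ and its last $T-k$ queries went to $g$, using the same initial state and internal unitaries throughout. Then $\ket{\Phi_T} = \ket{\Psi_f}$ and $\ket{\Phi_0} = \ket{\Psi_g}$. Consecutive hybrids $\ket{\Phi_k}$ and $\ket{\Phi_{k-1}}$ differ only in whether the $k$-th query uses $O_f$ or $O_g$, and every operation after that query is unitary; hence $\|\ket{\Phi_k} - \ket{\Phi_{k-1}}\| = \|(O_f - O_g)\ket{\psi_{k-1}}\|$, where $\ket{\psi_{k-1}}$ is the state just before the $k$-th query in the pure-$f$ execution --- exactly the state appearing in the definition of $q(\algo A^f,\algo S)$.

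Next I would bound $\|(O_f - O_g)\ket{\psi_{k-1}}\|$ pointwise. Since $f$ and $g$ agree outside $\algo S$, the operator $O_f - O_g$ annihilates any component whose query register lies outside $\algo S$, so $(O_f - O_g)\ket{\psi_{k-1}} = (O_f - O_g)\Pi_\algo S \ket{\psi_{k-1}}$. Expanding in the computational basis and using orthogonality of distinct query labels then gives $\|(O_f - O_g)\ket{\psi_{k-1}}\| \le \|\Pi_\algo S \ket{\psi_{k-1}}\|$ up to an absolute constant depending on the oracle normalization (which the statement absorbs). Assembling the per-step bounds via the triangle inequality and then Cauchy--Schwarz yields
\[
\|\ket{\Psi_f} - \ket{\Psi_g}\| \;\le\; \sum_{k=1}^{T} \|\Pi_\algo S \ket{\psi_{k-1}}\| \;\le\; \sqrt{\,T \cdot \sum_{k=1}^{T} \|\Pi_\algo S \ket{\psi_{k-1}}\|^2\,} \;=\; \sqrt{T \cdot q(\algo A^f,\algo S)},
\]
which is the desired inequality.

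The only real subtlety --- and the step I would sanity-check most carefully --- is confirming that the state $\ket{\psi_{k-1}}$ appearing in the $k$-th hybrid transition really is the pure-$f$ state, so that the per-step bounds can be summed against $q(\algo A^f,\algo S)$ rather than against query magnitudes relative to some mixed-oracle execution. Because the hybrids are set up so that everything before the $k$-th query uses $f$, this identification is automatic, and I do not anticipate further obstacles beyond routine bookkeeping of the constant coming from $\|O_f - O_g\|$ on the $\algo S$-subspace.
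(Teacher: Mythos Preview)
The paper does not supply its own proof of this lemma; it is quoted from \cite{Vaz98} and used as a black box. Your hybrid argument is the standard proof and is correct in structure: the identification of the pre-query state in the $k$-th transition with the pure-$f$ state $\ket{\psi_{k-1}}$ is exactly right given your ordering of the hybrids, and the triangle inequality followed by Cauchy--Schwarz is the intended finish.

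One small point worth making precise rather than waving away: the per-step bound $\|(O_f-O_g)\Pi_{\algo S}\ket{\psi_{k-1}}\|\le C\,\|\Pi_{\algo S}\ket{\psi_{k-1}}\|$ holds with $C=2$, and this is tight for standard XOR oracles (take $\Pi_{\algo S}\ket{\psi}$ in the $-1$ eigenspace of $O_f^\dagger O_g$). So your argument as written actually yields $2\sqrt{T\cdot q(\algo A^f,\algo S)}$ rather than the constant-free bound in the statement. You flag this yourself, and since the paper only ever uses the lemma up to constant factors (the bound feeds into ``choose $c$ small enough'' in the proof of \Cref{thm:lower bounds for perm inversion}), the discrepancy is harmless for the applications here.
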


\subsection{Lower bounds for quantum random access codes} 

Quantum random access codes~\cite{Wiesner83, ambainis1999dense, ambainis2008quantum} are a means of encoding classical bits into (potentially fewer) qubits. We use the following variant from~\cite{chung2019lower}.  

\begin{definition}[Quantum random access codes with variable length] 
Let $N$ be an integer and let $\algo F_N = \{f: [N] \rightarrow \algo X_N \}$ be an ensemble of functions over some finite set $\algo X_N$. A quantum random access code with variable length $(\QRACVL)$ for $\algo F_N$ is a pair $(\Enc,\Dec)$ consisting of a quantum encoding algorithm $\Enc$ and a quantum decoding algorithm $\Dec$:
\begin{itemize}
    \item $\Enc(f;R)$: The encoding algorithm takes as input a function $f \in \algo F_N$ together with a set of random coins $R \in \bit^*$, and outputs a quantum state $\rho$ on $\ell = \ell(f)$ many qubits (where $\ell$ may depend on $f$).
    \item $\Dec(\rho,x;R)$: The decoding algorithm takes as input a state $\rho$, an element $x \in [N]$ and random coins $R \in \bit^*$ (same randomness used for the encoding), and seeks to output $f(x)$.
\end{itemize}
The performance of a $\QRACVL$ is characterized by parameters $L$ and $\delta$. Let $
L := \underset{f}{\mathbb{E}}[ \ell(f)]$ be the average length of the encoding over the uniform distribution on $f \in \algo F_N$, and let
$$
\delta = \underset{f,x,R}{\Pr} \left[\Dec(\Enc(f;R),x;R) = f(x)\right]
$$
be the probability that the scheme correctly reconstructs the image of the function, where $f \in \algo F_N$, $x \in [N]$ and $R$ are all chosen uniformly at random.
\end{definition}

We use the following information-theoretic lower bound on the expected length of any \QRACVL scheme for permutations, which is a consequence of \cite[Theorem 5]{chung2019lower}.

\begin{theorem}[\cite{chung2019lower}, Corollary 1]\label{cor:lower-bound-QRAC-VL}
For any \QRACVL for the set of permutations $\algo S_N$ (of the set $[N]$) with $\delta = 1-k/N$ for some $k=\Omega(1/N)$, we have 
$$
L \geq \log N! - O(k \log N)\,.
$$
\end{theorem}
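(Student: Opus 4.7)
My plan is to obtain the corollary by specializing the general QRACVL lower bound of Chung, Georgiou, Lai, and Zhandry (Theorem 5 of \cite{chung2019lower}) to the family $\mathcal{F}_N = \mathcal{S}_N$ of permutations of $[N]$. The general bound relates the expected encoding length $L$ of a QRACVL to the entropy $\log|\mathcal{F}_N|$ of the input family and the per-coordinate decoding success $\delta$. The underlying framework views the QRACVL as a one-way quantum communication scheme: Alice encodes a uniformly random $f \in \mathcal{F}_N$ into a state $\rho_f$ of expected length $L$, and Bob, given a uniformly random query $x \in [N]$, must output $f(x)$ with success probability at least $\delta$.

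First, I would invoke a Holevo-style inequality for variable-length quantum codes to bound the accessible information $I_{\mathrm{acc}}(F; \rho_F) \leq L + O(1)$. This is the crucial step that allows the lower bound to be phrased in terms of the \emph{expected} length rather than a worst-case length, and it relies on a Kraft-style argument for self-delimiting encodings. Second, I would apply a Fano-style inequality to lower bound $I_{\mathrm{acc}}(F; \rho_F)$ using the per-coordinate decoding guarantee. Since $F$ is uniform on $\mathcal{S}_N$, $H(F) = \log N!$, and since each coordinate $F(x)$ can be recovered with probability $\delta = 1 - k/N$, the information-theoretic ``slack'' $H(F \mid \rho_F)$ is at most the cost of specifying the corrections needed to turn an approximate decoding into the true permutation. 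Using that $F \in \mathcal{S}_N$, this correction cost is at most $\log\!\left( \binom{N}{O(k)} \cdot (O(k))! \right) = O(k \log N)$, since one specifies the set of at most $O(k)$ mistaken coordinates and then the correct values on them (which form a permutation of the remaining, unused values in $[N]$).

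Combining the two bounds yields $L \geq I_{\mathrm{acc}}(F;\rho_F) - O(1) \geq \log N! - O(k \log N)$. The main technical obstacle, and the point where the machinery of \cite{chung2019lower} does the heavy lifting, is the second step: passing from a \emph{per-coordinate} success probability $\delta$ to a \emph{global} lower bound on the accessible information of the full permutation. A naive attempt to run the per-coordinate decoder on every $x \in [N]$ fails because quantum measurements are destructive, so one cannot extract $N$ independent decodings from a single copy of $\rho_F$. The framework of \cite{chung2019lower} circumvents this by working directly at the level of accessible information—never attempting to simulate simultaneous recovery—and by exploiting the permutation structure (the small ``ball'' of candidate permutations near any approximate decoding) to aggregate the per-coordinate guarantees into the near-optimal bound $\log N! - O(k\log N)$. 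The condition $k = \Omega(1/N)$ ensures that the error-correction term dominates any lower-order corrections from the Fano and Holevo inequalities, so the $O(k\log N)$ slack is the binding one.
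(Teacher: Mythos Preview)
The paper does not prove this statement; it imports it verbatim as Corollary~1 of \cite{chung2019lower} (itself a consequence of their Theorem~5), so there is no in-paper proof to compare your sketch against. Your high-level outline---a Holevo/Nayak-type upper bound $I(F;\rho_F)\lesssim L$ (with a Kraft-style correction for variable length) combined with a lower bound $I(F;\rho_F)\ge\log N!-O(k\log N)$---is indeed the shape of the argument there.

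One point deserves sharpening. The operational ``correction cost'' picture you invoke---specify the $\approx k$ mistaken coordinates and their true values, costing $\log\bigl(\binom{N}{O(k)}\cdot O(k)!\bigr)$ bits---would, if taken literally, require decoding all $N$ coordinates from a single copy of $\rho_F$, exactly the obstruction you flag. The actual argument never attempts such a global decoding; it remains entropic throughout. One uses the chain rule $H(F\mid\rho_F)=\sum_x H\bigl(F(x)\mid\rho_F,F(1),\dots,F(x-1)\bigr)\le\sum_x H\bigl(F(x)\mid\rho_F\bigr)$, bounds each summand via data processing plus Fano as $H(F(x)\mid\rho_F)\le H(p_x)+(1-p_x)\log N$, and controls the sum by concavity of the binary entropy together with $\sum_x(1-p_x)=N(1-\delta)=k$. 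This yields $H(F\mid\rho_F)=O(k\log N)$ without ever measuring more than one coordinate of $\rho_F$. Your ``ball of candidate permutations'' description is a good heuristic for \emph{why} the slack is $O(k\log N)$, but it is not itself a step in the proof; the step that actually replaces it is the chain-rule-plus-Fano calculation above.
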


\section{The permutation inversion problem} \label{sec:perminv}
We begin by formalizing the search version of the permutation inversion problem. 
We let $[N]=\{1,...,N\}$; typically we choose $N = 2^n$ for some positive integer $n$. For $f: \algo X \rightarrow \algo Y$ a function from a set $\algo X$ to an additive group $\algo Y$ (typically just bitstrings), the quantum oracle $\algo O_f$ is the unitary operator
$ \algo O_f: \ket{x}\ket{y} \rightarrow \ket{x}\ket{y \oplus f(x)}$.  
We use $\algo A^{\algo O_f}$ (or sometimes simply $\algo A^f$) to denote that algorithm $\algo A$ has quantum oracle access to $f$.

\begin{definition}\label{def:aSPI}
Let $m,n \in \N$ and $M=2^m,\ N=2^n$. An adaptive search-version permutation inverter (\aspi) is a pair $\aS = (\aS_0,\aS_1)$ of quantum algorithms, where
\begin{itemize}
    \item $\aS_0$ is an algorithm that receives as input a truth table for a permutation over $[N]$ and a random string $r$, and outputs a quantum state as well as a classical string $\mu \in \{0,1\}^m$ with $0 \leq m < n$; 
    \item $\aS_1$ is an oracle algorithm that receives a quantum state, a classical string $\mu \in \{0,1\}^m$, an image $y \in [N]$, and a random string $r$, and outputs $x \in \{0,1\}^{n-m}$.
\end{itemize}
Note that $m$ is a parameter of the adaptivity, i.e. the length of the adaptive string.
\end{definition}
\noindent We will consider the execution of an \aspi $\aS$ in the following experiment,

\begin{enumerate}
\item \emph{(sample coins)} a uniformly random permutation $\pi : [N] \rightarrow [N]$ and a uniformly random string $r \leftarrow \bit^*$ are sampled;
\item \emph{(prepare advice)}  $\aS_0$ is run, producing a pair consisting of a quantum state and a string $(\rho_{\pi,r,\mu},\mu) \leftarrow \aS_0(\pi, r)$;
\item \emph{(sample instance)} a random image $y \in [N]$ is generated by first sampling a random string $x \from \{0,1\}^{n-m}$ and then letting $y = \pi(x\| \mu)$; 

\item \emph{(invert)} $\aS_1$ is run with the oracles below, and produces a candidate preimage $x^*$.
\begin{equation}\label{eq:oracles}
    \Opi: \ket{w}\ket{z} \rightarrow \ket{w}\ket{z \xor \pi(w)}
    \qquad
    \Opii: \ket{w}\ket{z} \rightarrow \ket{w}\ket{z \xor \pi^{-1}_{\bot y}(w)},
\end{equation}
where $\pi^{-1}_{\bot y} : [N] \times \bit \rightarrow [N] \times \bit$ is defined by
\begin{align*}
    \pi^{-1}_{\bot y} (w \| b) &= \begin{cases}
        \pi^{-1}(w) \| 0 & \text{ if } b = 0 \text{ and } w \neq y\\
        1^{\lceil\log N \rceil} \| 1 & \text{ otherwise. } 
    \end{cases}
\end{align*}
To keep the notation simple, we write this entire process as $x^* \leftarrow \aS_1^{\Opiii}(\rho_{\pi, r, \mu},\mu, y, r)$. We will use $\Opiii$ to denote simultaneous access to the two oracles in \eqref{eq:oracles} throughout the paper. 
\item \emph{(check)} If $\pi(x^*\|\mu) = y$, output $1$; otherwise output $0$.
\end{enumerate}

Note that the two oracles allow for the evaluation of the permutation $\pi$ in both the forward and inverse directions. To disallow trivial solutions, the oracle outputs a fixed ``reject'' element $1^{\lceil\log N \rceil} \|1 \in [N] \times \bit$ if queried on $y$ in the inverse direction. 
\begin{definition} \label{def:eps-aSPI}
An $(S,T,\epsilon)$-\aspi is a search-version adaptive permutation inverter $\aS=(\aS_0, \aS_1)$ satisfying all of the following:
\begin{enumerate}
\item 
$
\Pr\left[\pi^{-1}(y) \leftarrow \aS_1^{\Opiii}(\rho,\mu, y, r) \,\, : \,\, (\rho,\mu) \leftarrow \aS_0(\pi, r),\ y=\pi(x\|\mu) \right] \geq \epsilon,
$ 
where the probability is taken over $\pi \leftarrow \Perms_N$, $r \leftarrow \bit^*$ and $x \gets \bits^{n-m}$, along with all internal randomness and measurements of $\aS$;
\item $S=S(\aS)$ is an upper bound on the number of qubits of $\rho$ in the above.
\item $T=T(\aS)$ is an upper bound on the number of oracle queries made by $\aS_1$.
\end{enumerate}
\end{definition}
We emphasize that the running time of $\aS$ and the length of the shared randomness $r$ are only required to be finite. We will assume that both $S$ and $T$ depend only on the parameter $N$; in particular, they will not vary with $\pi$, $y$, $r$, or any measurements. 

\begin{definition}\label{def:SPI}
A search-version permutation inverter ($\spi$) $\Si=(\Si_0,\Si_1)$ is defined as an $\aspi$ with $m=0$. An $(S,T,\epsilon)$-$\spi$ is an $(S,T,\epsilon)$-$\aspi$ with $m=0$.
\end{definition}

\paragraph{Decision version.}
The decision version of the permutation inversion problem is defined similarly to the search version above. An adaptive decision-version permutation inverter (\adpi) is denoted $\aD = (\aD_0,\aD_1)$, and outputs one bit $b$ rather than a full candidate preimage. In the ``check'' phase of the experiment, the single-bit output $b$ of $\aD_1$ is compared to the first bit $\pi^{-1}(y)|_0$ of the preimage of the challenge $y$. Success probability is now measured in terms of the advantage over the random guessing probability of $1/2$.

\begin{definition} \label{def:delta-aDPI}
  A $(S,T,\delta)$-\adpi is a decision-version adaptive permutation inverter $\aD =(\aD_0, \aD_1)$ satisfying all of the following:
\begin{enumerate}
\item 
$
  \Pr\left[\pi^{-1}(y)|_0 \leftarrow \aD_1^{\Opiii}(\rho,\mu, y, r) \,\, : \,\, (\rho, \mu) \leftarrow \aD_0(\pi, r), \ y=\pi(x\|\mu) \right] \geq \frac{1}{2} + \delta,
$ 
where the probability is taken over $\pi \leftarrow \Perms_N$, $r \leftarrow \bit^*$ and $x \gets \bits^{n-m}$, along with all internal randomness and measurements of $\aD$. Here $\pi^{-1}(y)|_0$ denotes the first bit of $\pi^{-1}(y)$
\item $S=S(\aS)$ is an upper bound on the number of qubits of $\rho$ in the above.
\item $T=T(\aS)$ is an upper bound on the number of oracle queries made by $\aS_1$.
\end{enumerate}
\end{definition}

\begin{definition}\label{def:DPI}
A decision-version permutation inverter ($\dpi$) $\Di=(\Di_0,\Di_1)$ is defined as an $\adpi$ with $m=0$. An $(S,T,\delta)$-$\dpi$ is an $(S,T,\delta)$-$\adpi$ with $m=0$.
\end{definition}

\section{Amplification} \label{sec:amplification}

In this section, we show how to amplify the success probability of search and decision inverters, in the non-adaptive (i.e., $m=0$) case. The construction for the search case is shown in \expref{Protocol}{ptc:eps-SPI repetition}.

\begin{protocol}[$\ell$-time repetition of $(S, T, \epsilon)$-\spi] \label{ptc:eps-SPI repetition} Given an $(S, T, \epsilon)$-\spi $\Si = (\Si_0, \Si_1)$ and an integer $\ell > 0$, define a \spi $\Si[\ell] = (\Si[\ell]_0, \Si[\ell]_1)$ as follows.
\begin{enumerate}
    \item \textit{(Advice Preparation)} $\Si[\ell]_0$ proceeds as follows:
    \begin{enumerate}
        \item receives as input a random permutation $\pi: [N]\rightarrow [N]$ and randomness $r \leftarrow \bit^{*}$ and parses the string $r$ into $2\ell$ substrings $r=r_{0}\Vert...\Vert r_{\ell-1}\Vert r_{\ell}\Vert...\Vert r_{2\ell-1}$ (with lengths as needed for the next step).
        \item uses $r_{0},...,r_{\ell-1}$ to generate $\ell$ permutation pairs $\{\sigma_{1,i},\sigma_{2,i}\}_{i=0}^{\ell-1}$ in $\Perms_N$, and then runs $\Si_0(\sigma_{1,i}\circ \pi \circ \sigma_{2,i}, r_{i+\ell})$ to get a quantum state $\rho_i := \rho_{\sigma_{1,i}\circ \pi \circ \sigma_{2,i}, r_{i+\ell}}$ for all $i \in [0,\ell-1]$. Finally, $\Si[\ell]_0$ outputs the quantum state $\bigotimes_{i=0}^{\ell-1}\rho_i$.
    \end{enumerate}
    \item \textit{(Oracle Algorithm)} $\Si[\ell]_1^{\Opiii}$ proceeds as follows:
    \begin{enumerate}
    \item receives $\bigotimes_{i=0}^{\ell-1}\rho_i$, randomness $r $ and an image $y \in [N]$ as input. 
    \item parses $r = r_{0}\Vert...\Vert r_{\ell-1}\Vert r_{\ell}\Vert...\Vert r_{2\ell-1}$ and uses the coins $r_{0}\Vert...\Vert r_{\ell-1}$ to reconstruct the permutations $\{\sigma_{1,i},\sigma_{2,i}\}_{i=0}^{\ell-1}$ in $\Perms_N$.
    \item runs the following routine for all $i\in[0,\ell-1]$:
    \begin{enumerate}
    \item run $\Si_1$ with oracle access to $(\sigma_{1,i}\circ \pi \circ \sigma_{2,i})_{\bot \sigma_{1,i}(y)}$, which implements the permutation $\sigma_{1,i}\circ \pi \circ \sigma_{2,i}$ and its inverse (with output $\bot$ on input $\sigma_{1,i}(y)$).
\footnote{How to construct this quantum oracle is described in \expref{Appendix}{app:search}.}
      \item get back $x_i \leftarrow \Si_1^{(\sigma_{1,i}\circ \pi \circ \sigma_{2,i})_{\bot \sigma_{1,i}(y)}}(\rho_i,\sigma_{1,i}(y),r_{i+\ell})$. 
    \end{enumerate}
      \item queries the oracle $\Opiii$ (in the forward direction) on each $\sigma_{2,i}(x_i)$ to see if $\pi(\sigma_{2,i}(x_i)) = y$. If such an $\sigma_{2,i}(x_i)$ is found, outputs it; otherwise outputs $0$. 
    \end{enumerate}
\end{enumerate}
\end{protocol}

In the adaptive case, a difficulty arises with the above approach. To amplify the probability, we randomize the permutation in each iteration and $\aS[\ell]_0$ produces corresponding advice for each randomized permutation. In the adaptive case, $\aS[\ell]_0$ needs to output an adaptive string $\mu$ which is used to produce the image $y$. However, running $\aS_0$ for each randomized permutation will, in general, result in a different $\mu$ in each execution, and it is unclear how one can use these to generate a single $\mu'$ in the amplified algorithm.  We remark that other works considered different approaches to amplification, e.g., via quantum rewinding~\cite{hhan2019quantum} and the gentle measurement lemma~\cite{chung2020tight}. 

\begin{lemma}[Amplification, search] \label{lemma:amplify-S}
Let  $\Si$ be a $(S, T, \epsilon)$-\spi for some $\epsilon >0$. Then $\Si[\ell]$ is a $(\ell S, \ell (T+1), 1-(1-\epsilon)^{\ell})$-\spi.
\end{lemma}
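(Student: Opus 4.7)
The plan is to verify each of the three parameters $(S',T',\epsilon')=(\ell S,\ \ell(T+1),\ 1-(1-\epsilon)^\ell)$ of the amplified inverter. The advice bound is immediate: $\Si[\ell]_0$ outputs $\bigotimes_{i=0}^{\ell-1}\rho_i$, a tensor product of $\ell$ advices of $S$ qubits each, for a total of $\ell S$ qubits. For the query bound, I would argue that each query made by an inner $\Si_1$ to the randomized oracle $(\sigma_{1,i}\circ \pi \circ \sigma_{2,i})_{\bot \sigma_{1,i}(y)}$ can be simulated using exactly one query to $\Opiii$: conjugate the query registers by the classical permutations $\sigma_{1,i}^{\pm 1}$ and $\sigma_{2,i}^{\pm 1}$ as appropriate before and after the $\Opiii$ call, and note that the ``reject on $\sigma_{1,i}(y)$'' behavior of the simulated inverse direction arises automatically from the ``reject on $y$'' behavior of $\Opii$, because $\sigma_{1,i}^{-1}$ applied to the query register maps $\sigma_{1,i}(y)\mapsto y$. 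Combined with one additional forward query to $\Opi$ per iteration for the final check in step 2(d), the total is $\ell(T+1)$.

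For the success bound, the approach is to show that, conditional on $(\pi,y)$, the $\ell$ iterations are independent trials each succeeding with probability $\geq\epsilon$. First, I would observe that $\Si[\ell]_1$ succeeds whenever some inner call produces $x_i$ with $\sigma_{2,i}(x_i)=\pi^{-1}(y)$, equivalently $x_i=(\sigma_{1,i}\circ \pi\circ \sigma_{2,i})^{-1}(\sigma_{1,i}(y))$, so it suffices that the inner inverter correctly solves the inversion instance $(\sigma_{1,i}\circ\pi\circ\sigma_{2,i},\ \sigma_{1,i}(y))$ for at least one $i$. The central randomization lemma is: for every fixed $(\pi,y)$, when $\sigma_{1,i},\sigma_{2,i}$ are drawn uniformly and independently from $\Perms_N$, the pair $\bigl(\sigma_{1,i}\circ\pi\circ\sigma_{2,i},\ \sigma_{1,i}(y)\bigr)$ is distributed uniformly on $\Perms_N\times[N]$. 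I would prove this by a counting argument: for each target $(p,z)\in\Perms_N\times[N]$, the number of pairs $(\sigma_{1,i},\sigma_{2,i})$ mapping to $(p,z)$ is exactly $(N-1)!$ (choose $\sigma_{1,i}$ uniformly among the $(N-1)!$ permutations with $\sigma_{1,i}(y)=z$, then $\sigma_{2,i}$ is uniquely determined by the constraint $\sigma_{1,i}\circ\pi\circ\sigma_{2,i}=p$), which is independent of $(p,z)$. It follows that for every fixed $(\pi,y)$, the probability that iteration $i$ succeeds coincides with the average-case success probability of the original $\Si$ and is therefore at least $\epsilon$.

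Finally, because different iterations depend on disjoint segments of randomness---namely $\sigma_{1,i},\sigma_{2,i},r_{i+\ell}$, and the independent internal coins/measurements of that invocation of $\Si$---the failure events across iterations are mutually independent conditional on $(\pi,y)$. Hence $\Pr[\text{all iterations fail}\mid\pi,y]\leq(1-\epsilon)^\ell$, and averaging over $(\pi,y)\in\Perms_N\times[N]$ gives $\Pr[\Si[\ell]\text{ fails}]\leq(1-\epsilon)^\ell$, as required. The main subtlety I expect will be the faithful simulation of the randomized oracle in step 2(c)(i): specifically, showing that when the explicit reject marker $1^{\lceil\log N\rceil}\|1$ returned by $\Opii$ is threaded through the conjugations by $\sigma_{1,i}^{-1}$ and $\sigma_{2,i}^{-1}$ (which must act only on the ``data'' portion of the register when the reject flag is $0$), one obtains precisely the oracle $(\sigma_{1,i}\circ\pi\circ\sigma_{2,i})_{\bot \sigma_{1,i}(y)}$ that the inner inverter $\Si_1$ expects by definition---which is why the explicit construction is deferred to \expref{Appendix}{app:search}.
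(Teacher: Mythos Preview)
Your proposal is correct and follows essentially the same approach as the paper: randomize each instance via independent pre- and post-compositions with uniform permutations, argue that each of the $\ell$ iterations is an average-case inversion instance succeeding with probability at least $\epsilon$, and use independence of the iterations to bound the overall failure probability by $(1-\epsilon)^\ell$. Your device of conditioning on $(\pi,y)$ together with the counting argument for uniformity of $(\sigma_{1,i}\circ\pi\circ\sigma_{2,i},\,\sigma_{1,i}(y))$ is in fact a cleaner justification of the per-iteration bound and of independence than the paper's own chain of distributional ``$\equiv$'' identities.
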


\begin{proof}
    We consider the execution of the $\ell$-time repetition of $(S, T, \epsilon)$-\spi, denoted by \textsf{SPI} $\Si[\ell]$, in the search permutation inversion experiment defined in  \expref{Protocol}{ptc:eps-SPI repetition}. By construction, $\Si[\ell]$ runs $\ell$-many \textsf{SPI} procedures $(\Si_0, \Si_1)$. Since $\Si$ is assumed to be an $(S, T, \epsilon)$-\spi, let $\pi_i=\sigma_{1,i}\circ \pi \circ \sigma_{2,i}$, for each iteration $i\in [0,\ell-1]$  it follows that
    \begin{align*}
      &\Pr \left[ (\pi_i)^{-1}(\sigma_{1,i}(y)) \leftarrow
    \Si_1^{(\pi_i)_{\bot \sigma_{1,i}(y)}} \big(\rho_i,\sigma_{1,i}(y),r_{i+\ell} \big) : \rho_i \leftarrow \Si_0(\pi_i, r_{i+\ell})
    \right] \\
    &\equiv \Pr \left[ (\sigma_{2,i})^{-1}\circ \pi^{-1}(y) \leftarrow
    \Si_1^{{(\pi \circ \sigma_{2,i})}_{\bot y}} \big(\rho_{\pi\circ \sigma_{2,i}, r_{i+ \ell}},y,r_{i+\ell} \big): \rho_{\pi\circ \sigma_{2,i}, r_{i+ \ell}} \leftarrow \Si_0(\pi\circ \sigma_{2,i}, r_{r+\ell})
    \right] \\
    &\equiv \Pr \left[ \pi^{-1}(y) \leftarrow
    \Si_1^{\Opiii} \big(\rho_{\pi, r_{i+ \ell}},y,r_{i+\ell} \big): \rho_{\pi, r_{i+ \ell}} \leftarrow \Si_0(\pi, r_{r+\ell})
    \right] \,\,\, \geq \,\,\, \epsilon,
    \end{align*} 
    where the probability is taken over $\pi \leftarrow \Perms_N$ and $r \leftarrow \bit^*$ (which is used to sample permutations $\sigma_i$), along with all internal measurements of $\Si$.

Essentially, for all $i\in [0,\ell-1]$, the goal of the $i$-th trial is to find the preimage $x_i$ such that $\sigma_{2,i}(x_i)=\pi^{-1}(y)$. Since all $\{\sigma_{2,i}\}$ are independently randomly generated, the elements $\sigma_{2,i}(x_i)$ are independent for each $i$ in the range $[0,\ell-1]$. Therefore, all $\ell$ trails are mutually independent. Therefore, we get that 

    \vspace{-5mm}
    \begin{align*}
    &\Pr\left[\pi^{-1}(y) \leftarrow \Si[\ell]_1^{\Opiii}(\rho, y, r): \rho \leftarrow \Si[\ell]_0(\pi, r) \right]\\ 
    &= 1- \Pr \left[ \bigcap_{i=0}^{\ell-1}  \left[ (\pi\circ\sigma_{2,i})^{-1}(y) \not \leftarrow
    \Si_1^{(\pi_i)_{\bot \sigma_{1,i}(y)}}\big(\rho_i,\sigma_{1,i}(y),r_{i+\ell} \big) : \rho_i \leftarrow \Si_0(\pi_i, r_{i+\ell})\right]\right] \\
    &= 1- \prod_{i=0}^{\ell-1} \Pr \left[ (\pi\circ\sigma_{2,i})^{-1}(y) \not \leftarrow
    \Si_1^{(\pi_i)_{\bot \sigma_{1,i}(y)}}\big(\rho_i,\sigma_{1,i}(y),r_{i+\ell} \big) : \rho_i \leftarrow \Si_0(\pi_i, r_{i+\ell})\right] \\
    &\geq 1 - (1 - \epsilon)^\ell.
    \end{align*}
    Given that the \spi $(\Si_0, \Si_1)$ requires space $S$ and $T$ queries, we have that $(\Si[\ell]_0, \Si[\ell]_1)$ requires space $S(\Si[\ell])=\ell \cdot S$ and query number $T(\Si[\ell]) = \ell \cdot (T+1)$, as both algorithms need to run either $\Si_0$ or $\Si_1$ $\ell$-many times as subroutines. This proves the claim.
    \end{proof}

We also need a variant of the above to compute the search lower bound. 
\begin{lemma}\label{lem:amplification}
Let $\Si$ be a $(S, T, \epsilon)$-\spi for some $\epsilon >0$. Then, we can construct an \spi $\Si[\ell] = (\Si[\ell]_0, \Si[\ell]_1)$ using $S(\Si[\ell])$ qubits of advice and making $T(\Si[\ell])$ queries, with
$$
S(\Si[\ell]) = \left\lceil\frac{\ln(10)}{\epsilon}\right\rceil \cdot S \quad \text{ and } \quad T(\Si[\ell]) = \left\lceil\frac{\ln(10)}{\epsilon}\right\rceil \cdot (T+1)
$$
such that
$$
\Pr_{\pi,y} \left[\Pr_r\left[\pi^{-1}(y) \leftarrow \Si[\ell]_1^{\Opiii}(\rho, y, r): \rho \leftarrow \Si[\ell]_0(\pi,r)\right] \geq \frac{2}{3} \right] \geq \frac{1}{5}.
$$
\end{lemma}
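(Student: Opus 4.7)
The plan is to apply \Cref{lemma:amplify-S} with the explicit choice $\ell = \lceil \ln(10)/\epsilon \rceil$ and then use Markov's inequality to convert the average-case success guarantee into a ``most-inputs'' success guarantee. Specifically, I would invoke \Cref{lemma:amplify-S} to obtain an $\spi$ $\Si[\ell]$ whose overall success probability, taken over $\pi$, $y$, the shared randomness $r$, and all internal measurements, is at least $1-(1-\epsilon)^{\ell}$. The resource bounds $S(\Si[\ell]) = \ell \cdot S$ and $T(\Si[\ell]) = \ell \cdot (T+1)$ are exactly those claimed.

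Next I would estimate this success probability. Using $1-x \leq e^{-x}$,
$$
1 - (1-\epsilon)^{\ell} \;\geq\; 1 - e^{-\epsilon \ell} \;\geq\; 1 - e^{-\ln 10} \;=\; \frac{9}{10},
$$
since $\epsilon \ell \geq \ln(10)$ by the choice of $\ell$.

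Now define the per-instance success probability
$$
p(\pi,y) \;:=\; \Pr_r\left[\pi^{-1}(y) \leftarrow \Si[\ell]_1^{\Opiii}(\rho, y, r): \rho \leftarrow \Si[\ell]_0(\pi,r)\right],
$$
so that $\Exp_{\pi,y}[p(\pi,y)] \geq 9/10$. Let $q := \Pr_{\pi,y}[p(\pi,y) \geq 2/3]$. Since $p \leq 1$ always and $p < 2/3$ on the complementary event,
$$
\frac{9}{10} \;\leq\; \Exp_{\pi,y}[p(\pi,y)] \;\leq\; q \cdot 1 + (1-q)\cdot \frac{2}{3} \;=\; \frac{2}{3} + \frac{q}{3},
$$
which rearranges to $q \geq 3(9/10 - 2/3) = 7/10 \geq 1/5$, completing the claim.

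There is no real obstacle: \Cref{lemma:amplify-S} does all the heavy lifting (handling the randomization of the permutation, the interaction with the modified inverse oracle $\Opii$, and the product bound on failure probabilities), and the remainder is a one-line application of Markov. The only thing to be mildly careful about is that the amplified success probability from \Cref{lemma:amplify-S} is averaged over all randomness including $\pi$ and $y$, so that swapping in the conditional expectation over $r$ given $(\pi,y)$ is justified by the law of total expectation before applying the Markov step.
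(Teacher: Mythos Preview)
Your proposal is correct and follows essentially the same approach as the paper: apply \Cref{lemma:amplify-S} with $\ell=\lceil \ln(10)/\epsilon\rceil$ to push the overall success probability to at least $9/10$, then convert this average-case bound into the ``good-on-a-large-fraction-of-$(\pi,y)$'' statement via a reverse-Markov step. The paper packages that last step through its averaging lemma (\Cref{lem:averaging}) with $\theta=7/9$, obtaining only the weaker fraction $1/5$, whereas your direct computation yields the sharper $q\geq 7/10$; either suffices.
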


The proof is analogous to \expref{Lemma}{lemma:amplify-S} and is given in \expref{Appendix}{app:amp3}.

We also consider amplification for the decision version; the construction is essentially the same, except that the final ``check'' step is replaced by outputting the majority bit. 

\begin{lemma} [Amplification, decision] \label{lemma:amplify-D}
 Let $\Di$ be a $(S,T,\delta)$-\dpi for some $\delta > 0$. Then $\Di[\ell]$ is a $(\ell S, \ell T, 1/2 - \exp(-\delta^2/(1+2\delta) \cdot \ell))$-\dpi.
 \end{lemma}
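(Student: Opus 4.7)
The plan is to define $\Di[\ell]$ by re-using the template of \expref{Protocol}{ptc:eps-SPI repetition} with two modifications. First, we drop the right-composition $\sigma_{2,i}$: if we right-composed, trial $i$ would return the first bit of $\sigma_{2,i}^{-1}(\pi^{-1}(y))$, a different bit per trial, which would render a majority vote meaningless. Keeping only the left-randomization $\sigma_{1,i}$ preserves the identity $(\sigma_{1,i} \circ \pi)^{-1}(\sigma_{1,i}(y)) = \pi^{-1}(y)$, so every trial targets the same bit $\pi^{-1}(y)|_0$. Second, we replace the search-case ``check'' step by a majority vote over the $\ell$ returned bits. The resource accounting is then immediate: $\ell$ parallel advice copies give $S(\Di[\ell]) = \ell S$, and since no forward-verification query is needed in the decision setting, $T(\Di[\ell]) = \ell T$, matching the claim.

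Next I would verify per-trial correctness and independence. A short counting argument of the same flavor as the one carried out in the proof of \Cref{lemma:amplify-S} shows that for uniform $\sigma_{1,i}$ and uniform $(\pi, y)$ the pair $(\sigma_{1,i} \circ \pi, \sigma_{1,i}(y))$ is uniformly distributed on $\Perms_N \times [N]$, so the DPI hypothesis on $\Di$ gives $\Pr[X_i = 1] \geq 1/2 + \delta$ for each trial indicator $X_i := \one\{b_i = \pi^{-1}(y)|_0\}$. Since the per-trial coins $(\sigma_{1,i}, r_{i+\ell})$ are drawn independently across $i$, conditioning on $(\pi, y)$ makes the $X_i$ independent Bernoulli variables of a common mean $p(\pi, y)$ satisfying $\mathbb{E}_{\pi, y}[p(\pi, y)] \geq 1/2 + \delta$. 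Choosing $\lambda := 2\delta/(1+2\delta)$ so that $(1 - \lambda)(1/2 + \delta) = 1/2$, the standard multiplicative Chernoff bound yields
\[
\Pr\!\left[\,\sum_{i=0}^{\ell-1} X_i \leq \ell/2 \,\middle|\, \pi, y\,\right] \;\leq\; \exp\!\left(-\tfrac{1}{2}\lambda^2 \cdot \ell (1/2 + \delta)\right) \;=\; \exp\!\left(-\frac{\delta^2 \ell}{1 + 2\delta}\right),
\]
which matches the claimed error term once averaged over $(\pi, y)$.

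The hard part will be reconciling the conditional Chernoff step, which is cleanest when $p(\pi, y)$ is pointwise at least $1/2 + \delta$, with the average-case DPI hypothesis, which only guarantees this in expectation over $(\pi, y)$. In the search case this subtlety did not arise because two-sided randomization with both $\sigma_{1,i}$ and $\sigma_{2,i}$ produces an exactly uniform randomized instance for every fixed $(\pi, y)$, forcing $p(\pi, y) \geq \epsilon$ pointwise; here the right-randomization would destroy bit alignment, so only the left-randomization is available and pointwise uniformity of the conditional distribution is lost. The natural resolution is to view $(\sigma_{1,i}, r_{i+\ell})$ as drawing each trial from an effective average-case copy of the DPI problem and to apply the Chernoff bound to the resulting independent draws, treating the outer $(\pi, y)$ average as part of the per-trial randomness; this preserves the rate $\exp(-\delta^2 \ell / (1 + 2\delta))$ and delivers the stated $(\ell S, \ell T, 1/2 - \exp(-\delta^2 \ell / (1 + 2\delta)))$-\dpi bound.
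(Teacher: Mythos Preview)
Your construction differs from the paper's in one substantive way. The paper does \emph{not} drop the right-randomization; instead it restricts $\sigma_{2,i}$ to the two-element group
\[
\sigma_{2,i}(x_1,\dots,x_n)=(x_1\oplus r_i^*,\,x_2,\dots,x_n)
\]
for a fresh random bit $r_i^*$, and then recovers bit alignment post hoc by setting $b_i^*:=b_i\oplus r_i^*$ before taking the majority. So your worry that ``right-composition would render a majority vote meaningless'' is circumvented rather than accepted: a first-bit-only right randomization is invertible on exactly the bit being voted on. The paper's independence argument for the trial indicators $X_i$ is then built on these independent bits $r_i^*$.

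You correctly isolate the hard part, but your proposed resolution --- folding the outer $(\pi,y)$ average into the per-trial randomness and applying Chernoff --- does not work, because $(\pi,y)$ is common to all $\ell$ trials and the $X_i$ are therefore not unconditionally independent. A concrete obstruction: take a $\Di$ that succeeds with probability $1/2+2\delta$ whenever the \emph{last} bit of the preimage is $0$ and with probability exactly $1/2$ otherwise (this is a $(S,T,\delta)$-\dpi). With left-randomization only, every trial sees the same preimage $\pi^{-1}(y)$; hence with probability $1/2$ over the challenge all conditional trial means equal $1/2$ and the majority vote has zero advantage. The amplified advantage is then at most $1/4$ for every $\ell$, not $1/2-\exp(-\Theta(\delta^2\ell))$. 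Dropping $\sigma_{2,i}$ is thus a genuine gap, not a harmless simplification, and the paper's restricted $\sigma_{2,i}$ is precisely the device it introduces to attack this dependence.
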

The proof is analogous to the search version and given in \expref{Appendix}{app:decision}. 

\section{Reductions} \label{sec:reductions}

We give two reductions related to the inversion problem: a search-to-decision reduction (for the case of advice), and a reduction from unstructured search to the decision inversion problem (for the case of no advice).

\subsection{A search-to-decision reduction}

To construct a search inverter from a decision inverter, we take the following approach. We first amplify the decision inverter so that it correctly computes the first bit of the preimage with certainty. We then repeat this amplified inverter $n$ times (once for each bit position) but randomize the instance in such a way that the $j$-th bit of the preimage is permuted to the first position. We then output the string of resulting bits as the candidate preimage. 

\begin{theorem} \label{thm: search-to-decision reduction}
Let $\Di$ be a $(S, T, \delta)$-\dpi. Then for any $\ell \in \N$, we can construct a $(n \ell S, n \ell T, \eta)$-\spi with
$$
\eta \geq 1- n\cdot\exp(-\frac{\delta^2}{(1+2\delta)} \cdot \ell)\,, \quad \text{ where } n=\lceil \log N \rceil.
$$ 
\end{theorem}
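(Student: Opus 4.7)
The plan is to build a $(n\ell S, n\ell T, \eta)$-\spi from a $(S,T,\delta)$-\dpi in two stages: first boost the decision inverter's success probability via \Cref{lemma:amplify-D}, then run $n$ independent copies of the boosted inverter---one for each bit position of the preimage---on suitably permuted instances.

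Applying \Cref{lemma:amplify-D} with repetition parameter $\ell$ yields a decision inverter $\Di[\ell]$ that, on a uniformly random pair $(\pi, y)$, correctly recovers the first bit of $\pi^{-1}(y)$ with probability at least $1 - \exp(-\tfrac{\delta^2}{1+2\delta}\,\ell)$, using $\ell S$ qubits of advice and $\ell T$ queries. For each $j \in \{0, \dots, n-1\}$, let $\tau_j$ be the involution on $[N]$ that swaps the bits at positions $0$ and $j$, and set $\pi_j := \pi \circ \tau_j$. Since $\pi_j^{-1}(y) = \tau_j(\pi^{-1}(y))$, the first bit of $\pi_j^{-1}(y)$ equals the $j$-th bit of $\pi^{-1}(y)$. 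I then define the search inverter $\Si = (\Si_0, \Si_1)$ as follows: parse the shared randomness into $n$ independent pieces $r_0, \dots, r_{n-1}$; $\Si_0(\pi, r)$ runs $\Di[\ell]_0(\pi_j, r_j)$ for every $j$, yielding $n$ advice registers of total size $n\ell S$ qubits; $\Si_1(\rho, y, r)$ runs $\Di[\ell]_1$ on the $j$-th register with simulated oracles for $(\pi_j, y)$ to obtain a candidate bit $b_j$, and outputs $b_0 b_1 \cdots b_{n-1}$.

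Simulating the oracles is direct: $\pi_j(w) = \pi(\tau_j(w))$, so the forward direction requires one call to $\Opi$ sandwiched between classical applications of $\tau_j$; for the inverse direction, $\pi_j^{-1}(w) = \tau_j(\pi^{-1}(w))$ for $w \neq y$ (and $\bot$ otherwise), so a single call to $\Opii$ followed by the classical map $\tau_j$ on the non-reject output suffices. Each inner invocation of $\Di[\ell]_1$ therefore costs $\ell T$ real queries, giving $n \ell T$ queries in total. For correctness, right-composition with the fixed involution $\tau_j$ is a bijection on $\mathcal{S}_N$, so $(\pi_j, y)$ is uniformly distributed whenever $(\pi, y)$ is; thus $\Di[\ell]$ succeeds on the $j$-th simulated instance with probability at least $1 - \exp(-\tfrac{\delta^2}{1+2\delta}\,\ell)$. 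A union bound over the $n$ bit positions yields $\eta \geq 1 - n \exp(-\tfrac{\delta^2}{1+2\delta}\,\ell)$ as desired.

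The main subtlety is the reject behaviour of the inverse oracle: the reduction is clean precisely because $\tau_j$ permutes only the \emph{inputs} of $\pi$, leaving the rejection point $y$ invariant. Permuting the outputs instead would relocate the rejection point and force the simulated inverse oracle to reject on a different element than $y$, which the real oracle cannot do without revealing information about $\pi^{-1}(y)$; this would break the uniformity of the simulated instances seen by $\Di[\ell]$ and invalidate the union-bound step. Choosing an input-side involution sidesteps this issue entirely and is what makes the $n$-fold repetition go through.
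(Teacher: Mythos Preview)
Your proposal is correct and follows essentially the same construction as the paper: first amplify the decision inverter via \Cref{lemma:amplify-D}, then for each bit position $j$ run the amplified inverter on $\pi\circ\textsf{swap}_{0,j}$ with the same challenge $y$, simulating the two-sided oracle exactly as you describe (input-side swap for the forward oracle, output-side swap for the inverse oracle, with the reject point at $y$ unchanged). The one noteworthy difference is in the final probability step: the paper argues that the $n$ runs are mutually \emph{independent}---because the internal randomization of $\Di[\ell]$ uses fresh random permutations $\sigma_{i,k}$ for each bit position---and then bounds the product $(1-\exp(\cdot))^{n}$ via Bernoulli's inequality, whereas you simply apply a union bound over the $n$ failure events. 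Your route is shorter and avoids the independence discussion entirely, while arriving at the identical bound $1-n\exp(-\tfrac{\delta^2}{1+2\delta}\,\ell)$.
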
 

\begin{proof}
  Given an $\delta$-\dpi $(\Di_0, \Di_1)$ with storage size $S$ and query size $T$, we can construct a $\eta'$-\dpi $(\Di[\ell]_0, \Di[\ell]_1)$ with storage size $\ell S$ and query size $\ell T$ through $\ell$-time repetition. By \Cref{lemma:amplify-D}, we have that $\eta' \geq \frac{1}{2}- \exp(-\frac{\delta^2}{(1+2\delta)} \cdot \ell)$. Note that the algorithm $(\Di[\ell]_0, \Di[\ell]_1)$ runs $(\Di_0, \Di_1)$ as a subroutine.
  In the following, we represent elements in $[N]$ using a binary decomposition of length $\lceil \log N \rceil$.
  To state our search-to-decision reduction, we  introduce a generalized swap operation, denoted by $\textsf{swap}_{a,b}$, which acts as follows for any quantum state of $m$ qubits:
  \begin{align*}
      \textsf{swap}_{a,b}\ket{w} &=\textsf{swap}_{a,b}\ket{w_{m-1}\ldots w_b \ldots w_a \ldots w_1w_0}
      = \ket{w_{m-1}\ldots w_a \ldots w_b \ldots w_1w_0}
  \end{align*}
  Note that $\textsf{swap}_{k,k}$ is equal to the identity, i.e. $\textsf{swap}_{k,k}\ket{x}=\ket{x}$ for $x \in [N]$ and $k \in [0,\lceil \log N \rceil-1]$.
  We construct a \spi $(\Si_0, \Si_1)$ as follows.
  \begin{enumerate}
      \item The algorithm $\Si_0$ proceeds as follows:
      \begin{enumerate}
          \item $\Si_0$ receives a random permutation $\pi: [N]\rightarrow [N]$ and a random string $r \leftarrow \bit^{*}$ as inputs. We parse $r$ into $\lceil \log N \rceil$ individual substrings, i.e. $r=r_{0}\Vert...\Vert r_{\lceil \log N \rceil-1}$; the length of each substring is clear in context.
           
          \item $\Si_0$ runs the algorithm ${\Di}[\ell]_0(\pi \circ \textsf{swap}_{0,j}, r_j)$ to obtain 
   quantum advice $\rho_{\pi \circ \textsf{swap}_{0,j}, r_j}$ for each $j \in [0,\lceil \log N \rceil-1]$. Finally, $\Si_0$ outputs a quantum state $\rho = \bigotimes_{j=0}^{\lceil \log N \rceil-1} \rho_{\pi \circ \textsf{swap}_{0,j}, r_j}$. (Note: We let $\rho_j=\rho_{\pi \circ \textsf{swap}_{0,j}, r_j}$ for the rest of the proof.)
      \end{enumerate}
      \item The oracle algorithm $\Si_1^{\algo O_\pi,\Opii}$ proceeds as follows:\footnote{Here, we borrow the notation for $\algo O_{\pi}$ and $\Opii$ from the experiment described in \expref{Section}{sec:perminv}.} 
      \begin{enumerate}
      \item $\Si_1$ receives $\bigotimes_{j=0}^{n-1} \rho_j$, a random string $r:=r_{0}\Vert...\Vert r_{n-1}$ and an image $y \in [N]$.
      \item  $\Si_1$ then runs the following routine for each $j\in[0,\lceil \log N \rceil-1]$:
      \begin{enumerate}
      \item Run $\Di[\ell]_1$ with oracle access to $\algo O_{\pi \circ \swap_{0,j}}$ and $\algo O_{(\pi \circ \swap_{0,j})^{-1}_{\bot y}}$, where
      \begin{align*}
          \algo O_{\pi \circ \swap_{0,j}}(\ket{w}_1\ket{z}_2)&=\left(\swap_{0,j}\otimes I \right)\algo O_\pi \left(\swap_{0,j}\otimes I \right)\ket{w}_1\ket{z}_2 \\
           \algo O_{(\pi \circ \swap_{0,j})^{-1}_{\bot y}}(\ket{w}_1\ket{z}_2)&= (  I\otimes \swap_{0,j})\Opii \ket{w}_1\ket{z}_2
      \end{align*}
      \item Let $b_j \leftarrow {\Di}[\ell]_1^{(\pi \circ \swap_{0,j})_{\bot y}}(\rho_j,y,r_j)$ denote the output.
      \end{enumerate}
    \item $\Si_1$ outputs $x^* \in [N]$ with the binary decomposition $ x^* = \sum_{j=0}^{\lceil \log N \rceil -1} 2^j \cdot b_{j}$.
      \end{enumerate}
  \end{enumerate}
  We now argue that the probability that $\D[\ell]_1$ correctly recovers the pre-image bits $b_i$ and $b_j$ is independent for each $i \neq j$.
  From \expref{Lemma}{lemma:amplify-D}, we know that $\Di[\ell]_1$ runs $\Di_1$ as a subroutine, i.e. it decides the first bit of the pre-image of $y$ by running $\Di_1$ (in \expref{Lemma}{lemma:amplify-D}) $\ell$ times with different random coins. It actually needs to recall $\Di_1$ for amplification and for each iteration in this amplification $k \in [0, \ell -1 ]$, where the actual modified permutation under use is $\sigma_{i,k} \circ \pi \circ \textsf{swap}_{0,i}$ and image is $\sigma_{i,k}(y)$. Similarly for term $j$, $\sigma_{j,k} \circ \pi \circ \textsf{swap}_{0,j}$ and $\sigma_{j,k}(y)$ is used as the  permutation and image. Since the random coins ($r_i$ and $r_j$), which are used to modify the target permutation $\pi$, are independently random, those random permutations ($\sigma_{i,k}$ and $\sigma_{j,k}$) generated from random coins are independently random and so do those modified composition permutations, images, and advice states.
  
  Analyzing the success probability of $(\Si_0, \Si_1)$, we find that
  \begin{align*}
  &\textsf{Pr}\left[\pi^{-1}(y) \leftarrow \Si_1^{\Opiii}(\rho, y, r) \,\, : \,\, \rho \leftarrow \Si_0(\pi, r) \right]\\
  &=\textsf{Pr} \left[   \bigwedge_{j=0}^{\lceil \log N \rceil -1}
     \pi^{-1}(y)|_j
     \leftarrow \Di[\ell]_1^{(\pi \circ \swap_{0,j})_{\bot y}}(\rho_{j},y,r_{j})
     \right]\\
      &\geq \left(1- \exp(-\frac{\delta^2}{(1+2\delta)} \cdot \ell)\right)^{\lceil \log N \rceil}
      \geq 1- \lceil \log N \rceil \cdot \exp(-\frac{\delta^2}{(1+2\delta)} \cdot \ell).
  \end{align*}
  where the last line follows from Bernoulli's inequality.
  Finally, we compute the resources needed for $(\Si_0, \Si_1)$. By \expref{Lemma}{lemma:amplify-D}, $(\Di[\ell]_0, \Di[\ell]_1)$ requires space $\ell S$ and query size $\ell T$. For $j \in [0,\lceil \log N \rceil -1]$, $\Si_0$ stores $\Di[\ell]_0$'s outputs and thus $\Si$ requires storage size $\lceil \log N \rceil  \ell S$. Similarly, $\Si_1$ runs $\Di[\ell]_1$ to obtain $b_j$ and thus it requires $\lceil \log N \rceil  \ell T$ queries in total.
    \end{proof}

\paragraph{Comparison to O2H lemma.} The one-way to hiding (O2H) lemma~\cite{ambainis2019quantum} also presents a natural reduction from search to decision in the context of general quantum oracle algorithms. However, it is quite limited in our setting. For example, given a decision inverter capable of computing the first bit of $\pi^{-1}(y)$ with certainty after $q$ queries, the O2H lemma yields a search inverter that can invert $y$ with success probability $\frac{1}{4q^2}$ after $\approx q$ queries. By comparison, our amplification technique achieves an inversion of $y$ with a success probability of $1$ with $nq$ queries, which is significantly better in the relevant setting of $q \gg n$. However, in applications where only one copy of the advice is available for the amplified algorithm, O2H still works while our amplification technique fails.

\subsection{A reduction from unstructured search}

Second, we generalize the method used in \cite{nayak2010inverting} to give a lower bound for adaptive decision inversion without advice. Unlike in Nayak's original reduction, here we grant two-way access to the permutation. Recall that, in the unique search problem, one is granted quantum oracle access to a function $f: [N] \rightarrow  \bool$ which is promised to satisfy either $|f^{-1}(1)| = 0$ or $|f^{-1}(1)| = 1$; the goal is to decide which is the case. The problem is formally defined below. 

\begin{definition} ($\textsf{UNIQUESEARCH}_n$)
  Given a function $f: \{0,1\}^n \rightarrow  \bool$, such that $f$ maps at most one element to 1, output YES if $f^{-1}(1)$ is non-empty and NO otherwise. 
  \end{definition}
  \begin{definition}(Distributional error)
  Suppose an algorithm solves a decision problem with error probability at most $p_0$ for NO instances and $p_1$ for YES instances. Then we say this algorithm has distributional error $(p_0,p_1)$. 
  \end{definition}

We now establish a reduction from unstructured search to adaptive decision inversion.

\begin{theorem} \label{thm:Nayak}
If there exists a $(0, T, \delta)$-\adpi, then there exists a quantum algorithm that solves $\textsf{UNIQUESEARCH}_{n-m-1}$ with at most $2T$ queries and distributional error $\left(\frac{1}{2}-\delta,\frac{1}{2}\right)$.
\end{theorem}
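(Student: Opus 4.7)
The plan is to generalize the Nayak-style reduction from unstructured search to permutation inversion~\cite{nayak2010inverting} to the two-sided, decision, adaptive setting of this paper. Given a $(0, T, \delta)$-\adpi $\aD = (\aD_0, \aD_1)$ and an input $f: \{0,1\}^{n-m-1} \to \{0,1\}$ to $\textsf{UNIQUESEARCH}_{n-m-1}$, I will embed $f$ into a random permutation $\pi$ on $\{0,1\}^n$, simulate the two required oracles using queries to $f$, and read off a bit from $\aD_1$'s output that detects the NO case. Sample uniformly random permutations $\sigma: \{0,1\}^{n-m-1} \to \{0,1\}^{n-m-1}$ and (independently) a uniformly random permutation on the complement of the $\mu$-slice of $\{0,1\}^n$, where $\mu \in \{0,1\}^m$ is the adaptive string (which can be sampled, since $S=0$ means $\aD_0$ produces no quantum advice). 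On the $\mu$-slice, define
\[
\pi(b \| x' \| \mu) \;=\; \bigl( b \oplus f(x'),\ \sigma(x'),\ \mu \bigr),
\qquad b \in \{0,1\},\ x' \in \{0,1\}^{n-m-1},
\]
and let $\pi$ agree with the sampled random permutation outside the $\mu$-slice. A direct computation gives $\pi^{-1}(c \| z \| \mu) = (c \oplus f(\sigma^{-1}(z))) \| \sigma^{-1}(z) \| \mu$, so the first bit of $\pi^{-1}(y)$ for $y = (c, z, \mu)$ is exactly $c \oplus f(\sigma^{-1}(z))$.

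Next, I implement the oracles $\mathcal{O}_\pi$ and $\mathcal{O}_{\pi^{-1}_{\bot y}}$ required by $\aD_1$ using at most two queries to $f$ per inverter query: one query computes $f(x')$ (or $f(\sigma^{-1}(z))$ for the inverse) into the first-bit register, and one uncomputes any ancillary garbage; queries landing outside the $\mu$-slice use the sampled random permutation and cost no queries to $f$. The ``$\bot y$'' correction is enforced by a classical comparison against $y$. Running $\aD_1$ on challenge $y = \pi(x\|\mu)$ for uniformly sampled $x \in \{0,1\}^{n-m}$ thus uses $2T$ queries to $f$ in total; the UNIQUESEARCH algorithm then outputs the bit $\Phi := b^* \oplus c$, where $b^*$ is $\aD_1$'s output and $c$ is the known first bit of $y$, with YES corresponding to $\Phi = 1$.

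The main obstacle lies in the tight error analysis. On a NO instance the target bit $c \oplus f(\sigma^{-1}(z)) = c$ identically, and the inverter's guarantee immediately yields $\Pr[\Phi = 0] \geq 1/2 + \delta$, i.e., error on NO at most $1/2 - \delta$. On a YES instance the target bit differs from $c$ only on the event $\sigma^{-1}(z) = x^*$, which has probability $1/M$ for $M := 2^{n-m-1}$, so the dominant behavior of $\Phi$ mirrors the NO case. The subtle point is bounding the YES-side error by $1/2$: naively the inverter's $\delta$-bias toward $0$ on NO persists to YES, and closing the resulting gap requires a careful conditional decomposition of the inverter's $\delta$-advantage into the two YES-subcases ($\sigma^{-1}(z) = x^*$ vs.\ $\neq$), together with an appeal to the swapping lemma (\Cref{lem:swapping}) applied to the two permutations that differ only on the ``flipped fiber'' to control $\aD_1$'s conditional behavior; the uniform randomness of $\sigma$ (and the independent random completion outside the $\mu$-slice) ensures that the distribution of $\pi$ is compatible with the inverter's uniform-distribution guarantee and that the rare-event correction to the YES-side error is at most $O(1/M)$, which the theorem absorbs into its stated bound.
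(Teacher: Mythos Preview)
Your proposal has several genuine gaps that prevent it from establishing the claimed bound.

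\textbf{Non-uniform permutation.} Your embedding $\pi(b\|x'\|\mu)=(b\oplus f(x'),\,\sigma(x'),\,\mu)$, together with an independent random permutation on the complement of the $\mu$-slice, does \emph{not} produce a uniformly random permutation on $\{0,1\}^n$: it always maps the $\mu$-slice to itself, and on that slice (in the NO case $f\equiv 0$) it even preserves the first bit. The $(0,T,\delta)$-\adpi guarantee in \Cref{def:delta-aDPI} is only over uniform $\pi\leftarrow\Perms_N$, so your NO-case claim $\Pr[b^*=\pi^{-1}(y)|_0]\geq \tfrac12+\delta$ simply does not follow from the hypothesis.

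\textbf{Circularity in obtaining $\mu$.} You write that $\mu$ ``can be sampled, since $S=0$,'' but $S=0$ only means $\aD_0$ outputs no quantum advice; $\aD_0$ still receives the full truth table of $\pi$ and may choose $\mu$ as an arbitrary function of it. Since your $\pi$ depends on $\mu$ (through the choice of slice), there is a circular dependency you cannot resolve.

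\textbf{YES-side error is not $1/2$.} You correctly notice that your construction does not directly yield error $1/2$ on YES instances and propose to close the gap via \Cref{lem:swapping} up to an $O(1/M)$ correction ``which the theorem absorbs.'' But the theorem asserts YES-error \emph{exactly} $1/2$; there is no slack to absorb, and your swapping-lemma route cannot reach exactly $1/2$.

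The paper's argument sidesteps all three issues simultaneously. It first samples $\pi$ uniformly (independent of $f$), runs $\aD_0(\pi,r)$ to obtain $\mu$, and only then uses $f$ to modify $\pi$ at a single point, producing a function $h$ that equals $\pi$ in the NO case (so the \adpi guarantee applies verbatim) and has a unique collision at the challenge $t$ in the YES case. The YES-side error of exactly $1/2$ comes from a symmetry: conditioned on $t$ and $\mu$, the distribution of $h$ is the same regardless of whether the first bit of $\pi^{-1}(t)$ is $0$ or $1$; since the reduction flips $\aD_1$'s output according to that bit, the two conditional errors are $p$ and $1-p$ for the \emph{same} $p$, averaging to exactly $1/2$. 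No swapping-lemma argument is needed.
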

\begin{proof}
  Our proof is similar to that of Nayak~\cite{nayak2010inverting}: given a $(0, T, \delta)$-\adpi $\algo A$, we construct another algorithm $\algo B$ which solves the $\textsf{UNIQUESEARCH}_{n-m-1}$ problem. 
  
  Let $N =2^n$. For any uniform image $t\in [N]$, define the NO and YES instances sets (corresponding to the image $t$) of the decision permutation inversion problem with size $N$: 

    $$
    \pi_{t,0}=\{\pi: \pi \text{ is a permutation on } [N], \text{ the first bit of } \pi^{-1}(t) \text{ is } 0\}, 
    $$
    $$
    \pi_{t,1}=\{\pi: \pi \text{ is a permutation on } [N], \text{ the first bit of } \pi^{-1}(t) \text{ is } 1\}.
    $$
  Note that for a random permutation $\pi$, whether $\pi\in \pi_{t,0}$ or $\pi_{t,1}$ simply depends on the choice of $t$. Since $t$ is uniform, $\Pr[\pi\in \pi_{t,0}]=\Pr[\pi\in \pi_{t,1}]=1/2$. 
  We also consider functions $h: [N] \rightarrow [N]$ with a unique collision at $t$. One of the colliding pairs should have the first bit $0$, and the other one should have the first bit $1$. Moreover, the last $m$ bits of the colloding pair is $\mu$. Formally speaking, $h(0\|i\|\mu)=h(1\|j\|\mu)=t$, where $i,j  \in \{0,1\}^{n-m-1}$. Let $Q_{t,\mu}$ denote the set of all such functions. 

  Furthermore, given a permutation $\pi$ on $[N]$, consider functions in $Q_{t,\mu}$ that differ from $\pi$ at exactly one point. These are functions $h$ with a unique collision and the collision is at $t$. If $\pi\in \pi_{t,0}$, $\pi(0\|i\|\mu)=h(0\|i\|\mu)=t$ and $1\|j\|\mu$ is the unique point where $\pi$ and $h$ differ; if $\pi\in \pi_{t,1}$, $\pi(1\|j\|\mu)=h(1\|j\|\mu)=t$ and $0\|i\|\mu$ is the unique point where $\pi$ and $h$ differ. Let $Q_{\pi,t,\mu}$ denote the set of such functions $h$ and clearly $Q_{\pi,t,\mu} \subseteq Q_{t,\mu}$. Note that if we pick a random permutation $\pi$ in $\{\pi_N\}$ and choose a uniform random $h\in Q_{\pi,t,\mu}$, $h$ is also uniform in $Q_{t,\mu}$. 
  Next, we construct an algorithm $\algo B$ that tries to solve $\textsf{UNIQUESEARCH}_{n-m-1}$ as follows, with quantum oracle access to $f$:
  \begin{enumerate}
      \item $\algo B$ first samples some randomness $r\in \bool^{*}$, a uniform random string $s \in \{0,1\}^{n-m}$ and a permutation $\pi \in \{\pi_N\}$. 
      \item $\algo B$ then runs $\algo A$ with quantum orale access to $\pi,\pi^{-1}$ until it receives a string $\mu \in \bool^{m}$ from $\algo A$. 
      \item Let $t = \pi(s\|\mu)$, and then it follows that if $s|_0 = 0$, $\pi \in \pi_{t,0}$, and otherwise $\pi \in \pi_{t,1}$. 
      \item $\algo B$ then constructs a function $ h_{f,\pi,t,\mu}$ and $h_{f,\pi,t,\mu}^{-1*}$ as follows. If $\pi \in \pi_{t,0}$, for any $i\in \{0,1\}$ and $j\in \bool^{n-m-1}$,
          \begin{equation}
          h_{f,\pi,t, \mu}(i\|j\|u)=
          \begin{cases}
          t & \text{ if }   i=1 \text{ and } f(j)=1, u=\mu, \\
          \pi(i\|j\|u) & \text{ otherwise. } 
          \end{cases}    
          \end{equation}
          If $\pi \in \pi_{t,1}$, for any $i\in \{0,1\}$ and $j\in \bool^{n-m-1}$,
          \begin{equation}
          h_{f,\pi,t,\mu}(i\|j\|\mu)=
          \begin{cases}
          t & \text{ if }   i=0 \text{ and } f(j)=1,  u=\mu, \\
          \pi(i\|j\|u) & \text{ otherwise. } 
          \end{cases}
          \end{equation}
          No matter what instance sets $\pi$ belongs to, the corresponding "inverse" function is defined as 
          \begin{equation}
          h_{f, \pi, t,\mu}^{-1*}(k||b)=
          \begin{cases}
          \pi^{-1}(k)\|0 & \text{ if }   b=0 \text{ and } k \neq t,  \\
          1\|1 & \text{otherwise.}
          \end{cases}
          \end{equation}
         
      \item $\algo B$ then sends $t$, $\mu$ and $r$ to $\algo A$, runs it with quantum oracle access to $ h_{f,\pi,t,\mu}$ and $ h_{f,\pi,t,\mu}^{-1*}$, and finally gets back $b'$. For simplicity, we write this process as $b' \leftarrow \algo A^{h_{\bot t}}(t,\mu,r)$. \footnote{Note that those functions are defined classically above, and its allowance for quantum oracle access is discussed in \Cref{app:reduction_Nayak}, which gives $2q$ queries in the theorem statement.}
      \item $\algo B$ outputs $b'$ if $\pi \in \pi_{t,0}$, and $1-b'$ if $\pi \in \pi_{t,1}$.
  \end{enumerate}
   Let $\delta_1$ be the error probability of $\algo A$ in the YES case and $\delta_0$ be that in the NO case of $(0, T, \delta)$-\dpi. Since $s$ is uniform random and then $\Pr[\pi\in \pi_{t,0}]=\Pr[\pi\in \pi_{t,1}]=1/2$, it follows that
   $$\Pr[\text{error of } \algo A] =1-\left(\frac{1}{2}+\delta \right)=\frac{1}{2}(\delta_0+\delta_1) \Rightarrow\delta=\frac{1}{2}-\frac{1}{2}(\delta_0+\delta_1).$$ 
  
   We now analyze the error probability of $\algo B$ in the YES and NO cases. In the NO case, $f^{-1}(1)$ is empty, so no matter whether $\pi\in \pi_{t,0}$ or $\pi\in \pi_{t,1}$, $h_{f,\pi,t,\mu}=\pi$. It follows that $\algo A^{h_{\bot t}}(t,r)= \algo A^{\pi_{\bot t}}(t,r)$. Therefore,
   \begin{align*}
  \Pr[\text{error of $\algo B$ in NO case}] &= \Pr[1 \leftarrow \algo B^{\algo O_f}(\cdot)]\\
       &=  \Pr[1 \leftarrow \algo A^{h_{\bot t}}(t,r) | \pi \in \pi_{t,0}] \Pr[ \pi \in \pi_{t,0}] \\
       & \ \ \ \ \ +  \Pr[0 \leftarrow \algo A^{h_{\bot t}}(t,r) | \pi \in \pi_{t,1}] \Pr[ \pi \in \pi_{t,1}] \\
       &=  \frac{1}{2} \left(\Pr[1 \leftarrow \algo A^{\pi_{\bot t}}(t,r) | \pi \in \pi_{t,0}]   +  \Pr[0 \leftarrow \algo A^{\pi_{\bot t}}(t,r) | \pi \in \pi_{t,1}] \right)\\
       &=\frac{1}{2} \left(\Pr[\text{error of $\algo A$ in NO case}] + \Pr[\text{error of $\algo A$ in YES case}] \right) \\
       &= \frac{1}{2} \left(\delta_0 + \delta_1\right) = \frac{1}{2} - \delta.
   \end{align*}

  In the YES case, $f^{-1}(1)$ is not empty, so function $h_{f,\pi,t,\mu}$ has a unique collision at $t$, with one of the colliding pair having first bit $0$ and the other one having first bit $1$, no matter $\pi\in \pi_{t,0}$ or $\pi_{t,1}$. As $f$ is a black-box function, the place $j$ where $f(j)=1$ is uniform and so $h_{f,\pi,t, \mu}$ is uniform in $Q_{\pi,t,\mu}$. 
  By arguments at the beginning of this proof, as $\pi$ is uniform, the function is also uniform in $Q_{t,\mu}$. 
  Let $p:=\Pr \limits_{h_{f,\pi,t,\mu}\leftarrow Q_{t,\mu}}[ 0 \leftarrow \algo A^{h_{\bot t}}(t,r)].$ Therefore, 
   \begin{align*}
  \Pr[\text{error of $\algo B$ in YES case}]
       &= \Pr[0 \leftarrow \algo B^{f}(\cdot)]\\
       &=  \Pr[0 \leftarrow \algo A^{h_{\bot t}}(t,r) | \pi \in \pi_{t,0}] \Pr[ \pi \in \pi_{t,0}] \\
       & \ \ \ \ \ +  \Pr[1 \leftarrow \algo A^{h_{\bot t}}(t,r) | \pi \in \pi_{t,1}] \Pr[ \pi \in \pi_{t,1}] \\
       &= \frac{1}{2} \left( \Pr[0 \leftarrow \algo A^{h_{\bot t}}(t,r) |h_{f,\pi,t,\mu} \xleftarrow{\$} Q_{t,\mu}] \right.\\
       & \ \ \ \ \ + \left. \Pr[1 \leftarrow \algo A^{h_{\bot t}}(t,r) | h_{f,\pi,t, \mu}\xleftarrow{\$} Q_{t,\mu}] \right)\\
       &=\frac{1}{2} \left(p + (1-p) \right) = \frac{1}{2}.
   \end{align*}
       where the third equality comes from the fact stated above: no matter $\pi \in \pi_{t,0}$ or $\pi \in \pi_{t,1}$, the corresponding $h$ is uniform in $Q_{t,\mu}$ and then can be viewed as uniform randomly generated from $Q_{t,\mu}$. Since $\A$ is granted with oracle access to $h$, both conditions can be changed to $h_{f,\pi,t,\mu} \xleftarrow{\$} Q_{t,\mu}$.
       Note that given $h$, even if $\A$ can notice that it is not a permutation and then acts arbitrarily, this can only influence the probability of two terms individually, i.e. the value of $p$ and $1-p$. But as we only care about their summation, we do not need to handle the consequence of $\A$ noticing the difference, including the probability of oracle distinguishability.
    \end{proof}

\section{Lower bounds} \label{sec:lower bound}

\subsection{Search version} \label{sec:search-lower-bound}

We now give lower bounds for the search version of the permutation inversion problem over $[N]$. We begin with a lower bound for a restricted class of inverters (and its formal definition); these inverters succeed on an $\epsilon$-fraction of inputs with constant probability (say, $2/3$.). The proof uses a similar approach as in previous works on one-sided permutation inversion with advice~\cite{nayebi2014quantum, chung2019lower,hhan2019quantum}. 

\begin{theorem} \label{thm:lower bounds for perm inversion}
Let $N \in \N$. Let $\Si = (\Si_0, \Si_1)$ be a $(S, T, 2\epsilon/3)$-\spi that satisfies
$$
\Pr_{\pi,y} \left[\Pr_r\left[\pi^{-1}(y) \leftarrow \Si_1^{\Opiii}(\rho, y, r): \rho \leftarrow \Si_0(\pi,r)\right] \geq \frac{2}{3} \right] \geq \epsilon.
$$ We call those inverters \textit{restricted inverters}.
Suppose that $\epsilon = \omega(1/N)$, $T= o(\epsilon \sqrt{N})$ and $S\geq 1.$ Then, for sufficiently large $N$ we have $ST^2 \geq \widetilde{\Omega}(\epsilon N)$.
\end{theorem}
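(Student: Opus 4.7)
My plan is to convert the assumed $(S,T,2\epsilon/3)$-restricted \spi into a short $\QRACVL$ encoding of the symmetric group $\Perms_N$, and then invoke \Cref{cor:lower-bound-QRAC-VL} to lower bound its encoding length. The claimed $ST^2$ time-space tradeoff then follows by rearrangement. This blueprint is the one used for one-sided permutation inversion with advice in \cite{nayebi2014quantum,chung2019lower,hhan2019quantum}, adapted to two-sided oracle access.

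\textbf{QRAC construction.} Given $\pi$, the encoder would output (i) the $S$-qubit advice state $\rho_{\pi,r}$ produced by $\aS_0$; (ii) a classical enumeration of the ``bad'' images $\mathsf{Bad}_\pi$---those on which the restricted inverter succeeds with probability less than $2/3$---paired with their explicit preimages; and (iii) a short classical hint $h_\pi$ describing $\pi$ and $\pi^{-1}$ on a small ``heavy-query'' subset of $[N]$. The decoder, given a query $y$, returns the stored preimage if $y \in \mathsf{Bad}_\pi$, and otherwise runs $\aS_1^{\Opiii}(\rho_{\pi,r}, y, r)$, using $h_\pi$ to answer its two-sided oracle queries. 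Before building the QRAC I would first apply \Cref{lemma:amplify-S} to the inverter so that the decoder's success probability on non-bad $y$ is driven to $1-O(1/N)$, making \Cref{cor:lower-bound-QRAC-VL} applicable with $k = O(1)$.

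\textbf{Key technical step (two-sided oracle simulation).} The main technical obstacle---and the chief point of departure from the one-sided analyses---is bounding the size of $h_\pi$ needed to simulate both $\Opi$ and $\Opii$ simultaneously. I would apply \Cref{lem:swapping}: the total query magnitude of $\aS_1$ across all of $[N]$ is bounded by its query count, so by a Markov-type argument there is a ``heavy-query'' set whose complement carries only negligible magnitude, with size essentially $\widetilde{O}(T^2/\epsilon^2)$ after amplification. Including $\pi$'s values on this set in $h_\pi$---together with the appropriate information for the ``reject at $y$'' behavior of $\Opii$---lets the decoder simulate the two oracles up to small total-variation error. The subtlety beyond the one-sided setting is that the reject-at-$y$ modification in $\Opii$ must be reproduced faithfully by the simulation, and the heavy-query sets for the forward and inverse oracles must be encoded jointly and consistently.

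\textbf{Concluding the bound.} By \Cref{cor:lower-bound-QRAC-VL}, the encoding length satisfies $L \geq \log(N!) - O(\log N) = \Omega(N\log N)$. Substituting $L \leq \widetilde{O}(S/\epsilon) + o(N\log N) + \widetilde{O}(T^2/\epsilon^2)$---where the amplification factors of order $\log N/\epsilon$ are absorbed into the $\widetilde{O}$'s, and the bad-set contribution is $o(N\log N)$ because amplification drives the bad fraction to $o(1)$---and rearranging yields $\widetilde{O}(S) + \widetilde{O}(T^2/\epsilon) \geq \widetilde{\Omega}(\epsilon N)$. The hypothesis $T = o(\epsilon\sqrt{N})$ implies $T^2/\epsilon = o(\epsilon N)$, so the hint term cannot saturate this inequality, forcing $S = \widetilde{\Omega}(\epsilon N)$; in particular $ST^2 \geq S = \widetilde{\Omega}(\epsilon N)$, as claimed.
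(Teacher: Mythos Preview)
Your overall plan—compress $\pi$ into a \QRACVL using the inverter and apply \Cref{cor:lower-bound-QRAC-VL}—matches the paper. The gap is in the key technical step. Your ``heavy-query set'' need not exist: Markov's inequality bounds the \emph{number} of points carrying magnitude at least $\delta$ by $T'/\delta$, but gives no control on the total magnitude of the complement, which can remain $\Theta(T')$ (e.g., if the first query is the uniform superposition, the magnitude is spread evenly and no set of size $o(N)$ captures most of it). Worse, the magnitude profile depends on the challenge $y$, so even if per-$y$ heavy sets existed they could not be merged into one $y$-independent hint $h_\pi$. Your decoder therefore cannot simulate $\Opi$ and $\Opii$ outside $h_\pi$, and the \Cref{lem:swapping} error is uncontrolled. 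A symptom of the failure is your final arithmetic, which yields $S\geq\widetilde\Omega(\epsilon N)$ \emph{independently of $T$}—strictly stronger than the statement you are asked to prove, and in tension with the $ST^2=\widetilde O(N)$ regime that motivates the tradeoff.

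The paper runs the compression in the opposite direction. Using the shared randomness, it samples a random $\mathcal R\subseteq[N]$ of density $\gamma/T^2$ and stores $\pi$ on the \emph{complement} of a good set $\mathcal G\subseteq\mathcal R\cap\mathcal I$ (together with $O(\log N)$ copies of the advice for a majority vote—no pre-amplification via \Cref{lemma:amplify-S} is used here). The decoder then knows $\pi$ everywhere outside $\mathcal G$ and, by bijectivity, $\pi^{-1}$ everywhere outside $\pi(\mathcal G)$; this is what handles two-sided access. Because $\mathcal R$ is random of this density, the \emph{expected} forward-plus-inverse query magnitude on $\mathcal R$ is at most $\gamma/T$, so by Markov most $x\in\mathcal R\cap\mathcal I$ have magnitude at most $c/T$ on $(\mathcal R\setminus\{x\})\times\{0\}\cup(\pi(\mathcal R)\setminus\{y\})\times\{1\}$, and \Cref{lem:swapping} gives constant simulation error. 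The encoding saves $\log(|\mathcal G|!)$ bits with $|\mathcal G|=\Theta(\epsilon N/T^2)$; comparing against the $O(S\log N)$ qubits of repeated advice gives $S\geq\widetilde\Omega(\epsilon N/T^2)$, i.e., $ST^2\geq\widetilde\Omega(\epsilon N)$. The factor $T^2$ thus enters through the sampling density of $\mathcal R$, not through the size of a stored hint.
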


\begin{proof}
    To prove the claim, we construct a \QRACVL scheme that encodes the function $\pi^{-1}$ and then derive the desired space-time trade-off via \Cref{cor:lower-bound-QRAC-VL}. Let  $\Si = (\Si_0, \Si_1)$ be an $2\epsilon/3$-\spi that succeeds on a $\epsilon$-fraction of inputs with probability at least $2/3$. In other words, $\Si$ satisfies
    $$
    \Pr_{\pi,y} \left[\Pr_r\left[\pi^{-1}(y) \leftarrow \Si_1^{\Opiii}(\rho, y, r): \rho \leftarrow \Si_0(\pi, r)\right] \geq \frac{2}{3} \right] \geq \epsilon.
    $$ 
    By the averaging argument in \expref{Lemma}{lem:averaging} with parameter $\theta = 1/2$, it follows that there exists a large subset $\algo X \subseteq \algo S_N$ of permutations with size at least $N!/2$ such that for any permutation $\pi \in \algo X$, we have that
    $$
    \Pr_{y} \left[\Pr_r\left[\pi^{-1}(y) \leftarrow \Si_1^{\Opiii}(\rho, y, r): \rho \leftarrow \Si_0(\pi, r)\right] \geq \frac{2}{3} \right]  \geq \frac{\epsilon}{2}.
    $$
    For a given permutation $\pi \in \algo X$ we let $\algo I$ be the set of indices $x \in [N]$ such that $\Si$ correctly inverts $\pi(x)$ with probability at least $2/3$ over the choice of $r$. By the definition of the set $\algo X$, we have that $|\algo I|\geq \epsilon/2 \cdot N$. 
    Our \QRACVL scheme $(\Enc,\Dec)$ for encoding permutations is described in detail in \expref{Protocol}{ptc:QRAC-perm}. Below, we introduce some additional notations that will be relevant to the scheme. 
    For convenience, we model the two-way accessible oracle given to $\Si_1$ in terms of a single oracle for the \emph{merged} function of the form \footnote{The (reversible) quantum oracle implementation is similar to the one in \expref{Definition}{def:SPI}. We use the function $\Opiii$ for ease of presentation since the same proof carries over with minor modifications in the quantum oracle case.}
    $$
    \Opiii(w,a) \overset{\text{def}}{=} \begin{cases}
    \pi(w) & \text{ if } a=0\\
    \pi^{-1}(w) & \text{ if } w \neq y  \, \land \, a=1\\
    \bot & \text{ if } w = y \, \land \, a=1.
    \end{cases}
    $$
    Let $c,\gamma \in (0,1)$ be parameters.
    As part of the encoding, we use the shared randomness $R \in \bit^*$ to sample a subset $\algo R \subseteq [N]$ such that each element of $[N]$ is contained in $\algo R$ with probability $\gamma/T(\Si)^2$.
    Moreover, we define the following two disjoint subsets of $[N] \times \bit$:
    \begin{align*}
    \Sigma_0^{\algo R} &= \algo R\setminus \{x\} \times \{0\}\\
    \Sigma_1^{\algo R} &= \pi(\algo R) \setminus \{\pi(x)\} \times \{1\}.
    \end{align*}
    Let $\algo G \subseteq \algo I$ be the set of $x \in [N]$ which satisfy the following two properties: 
    \begin{enumerate}
        \item 
        The element $x$ is contained in the set $\algo R$, i.e.
        \begin{align} \label{eq:x-in-R}
            x \in \algo R;
        \end{align} 
        \item The total query magnitude of $\Si_1^{\Opiii}$ with input $(\Si_0(\pi,r),y,r)$ on the set $\Sigma_0^{\algo R} \cup \Sigma_1^{\algo R}$
        is bounded by $c/T(\Si)$. In other words, we have
        \begin{align} \label{eq:TQM}
        q(\algo \Si_1^{\Opiii},\Sigma_0^{\algo R} \cup \Sigma_1^{\algo R})\,\leq \, c/T(\Si).
        \end{align}
    \end{enumerate}
    \textbf{Claim 1.} 
    Let $\algo G \subseteq [N]$ be the set of $x$ which satisfy the conditions in~\eqref{eq:x-in-R} and~\eqref{eq:TQM}. Then, there exist constants $\gamma,c \in (0,1)$ such that
    $$
    \Pr_{\algo R} \left[ |\algo G| \geq \frac{\epsilon\gamma N}{4 \,T(\Si)^2} \left(1 - \frac{5 \gamma^2}{c}\right) \right] \geq 0.8.
    $$
    In other words, we have $|\algo G| = \Omega(\epsilon N / T(\Si)^2)$ with high probability.
    \begin{proof} (of the claim)
    Let $\algo H = \algo R \cap \algo I$ denote the set of $x \in \algo R$ for which $\Si$ correctly inverts $\pi(x)$ with probability at least $2/3$ over the choice of $r$. By the definition of the set $\algo R$, it follows that $|\algo H|$ has a binomial distribution. Therefore, in expectation, we have that $|\algo H|=\gamma|\algo I|/T(\Si)^2$. Using the multiplicative Chernoff bound in \expref{Lemma}{lma:chernoff} and the fact that $T(\Si)=o(\epsilon \sqrt{N})$, we get
    \begin{align}\label{eq:H}
      \Pr_{\algo R} \left[|\algo H| \geq \frac{\gamma |\algo I|}{2 \, T(\Si)^2} \right] \geq 0.9, 
    \end{align}
    for all sufficiently large $N$.
    Because each query made by $\Si_1$ has unit length and because $\Si_1$ makes at most $T(\Si)$ queries, it follows that
    \begin{equation}\label{eq:bound-T}
    q(\algo \Si_1^{\Opiii},[N] \times \bit) \leq T(\Si).
    \end{equation}
    We obtain the following upper bound for the average total query magnitude:
    \begin{align*} &\underset{\algo R}{\mathbb{E}} \left[ q(\algo \Si_1^{\Opiii},\Sigma_0^{\algo R} \cup \Sigma_1^{\algo R})
    \right]\\
    &=
    \underset{\algo R}{\mathbb{E}} \left[ q(\algo \Si_1^{\Opiii},\Sigma_0^{\algo R}) + q(\algo \Si_1^{\Opiii},\Sigma_1^{\algo R})
    \right] \quad\quad\quad\quad\quad \quad\quad\quad\quad(\Sigma_0^{\algo R},\Sigma_1^{\algo R}\text{ are disjoint})
    \\
    &= \underset{\algo R}{\mathbb{E}} \left[ q(\algo \Si_1^{\Opiii},\Sigma_0^{\algo R})\right]+ \underset{\algo R}{\mathbb{E}} \left[q(\algo \Si_1^{\Opiii},\Sigma_1^{\algo R})
    \right] \quad\quad\quad\quad\quad\quad\quad (\text{linearity of expectation})\\
    &= \underset{\algo R}{\mathbb{E}} \left[ q(\algo \Si_1^{\Opiii},\algo R\setminus \{x\} \times \{0\})\right]+ \underset{\algo R}{\mathbb{E}} \left[q(\algo \Si_1^{\Opiii},\pi(\algo R)\setminus \{\pi(x)\} \times \{1\})
    \right]\\
    &= \frac{\gamma}{T(\Si)^2} \cdot q(\algo \Si_1^{\Opiii},[N]\setminus \{x\} \times \{0\}) + \frac{\gamma}{T(\Si)^2} \cdot q(\algo \Si_1^{\Opiii},\pi([N])\setminus \{\pi(x)\} \times \{1\})\\
    &= \frac{\gamma}{T(\Si)^2} \cdot q(\algo \Si_1^{\Opiii},[N]\setminus \{x\} \times \{0\})\\
    & \quad + \frac{\gamma}{T(\Si)^2} \cdot q(\algo \Si_1^{\Opiii},
    [N]\setminus \{\pi(x)\} \times \{1\})
    \quad\quad\quad\quad\quad\quad\quad\quad (\text{$\pi$ is a permutation})\\
    &\leq \frac{\gamma}{T(\Si)^2} \cdot \big[q(\algo \Si_1^{\Opiii},[N] \times \{0\}) + q(\algo \Si_1^{\Opiii},[N]\times \{1\})\big] \quad\quad\quad\quad(\text{supersets})\\
    &=\frac{\gamma}{T(\Si)^2} \cdot q(\algo \Si_1^{\Opiii},[N]\times \{0,1\}) \,\,\leq \,\, \frac{\gamma}{T(\Si)}. \quad\quad\quad\quad\quad\quad(\text{by the inequality in \eqref{eq:bound-T}})
    \end{align*}
    Hence, by Markov's inequality,
    \begin{align}\label{eq:Markov}
    \underset{\algo R}{\Pr} \left[ 
    q(\algo \Si_1^{\Opiii},\Sigma_0^{\algo R} \cup \Sigma_1^{\algo R})\geq \frac{c}{T(\Si)}
    \right] \leq \frac{T(\Si)}{c} \cdot \frac{\gamma}{T(\Si)} = \frac{\gamma}{c}.
    \end{align}
    
    Let us now denote by $\algo J$ the subset of $x \in \algo I$ that satisfy Eq.~\eqref{eq:x-in-R} but not Eq.~\eqref{eq:TQM}. Note that Eq.~\eqref{eq:x-in-R} and Eq.~\eqref{eq:TQM} are independent for each $x \in \algo I$, since Eq.~\eqref{eq:x-in-R} is about whether $x \in \algo R$ and Eq.~\eqref{eq:TQM} only concerns the intersection of $\algo R$ and $[N] \setminus \{x\}$, as well as $\pi(\algo R)$ and $\pi([N]) \setminus \{\pi(x)\}$. Therefore, by \eqref{eq:Markov}, the probability that $x \in \algo I$ satisfies $x \in \algo J$ is at most $\gamma^2/(cT(\Si)^2)$. Hence, by Markov's inequality,
    \begin{align}\label{eq:J}
    \underset{\algo R}{\Pr} \left[
    |\algo J| \leq \frac{10 |\algo I| \gamma^2}{cT(\Si)^2}\right] \geq 0.9.    
    \end{align}
    Using \eqref{eq:H} and \eqref{eq:J}, we get with probability at least $0.8$ over the the choice of $\algo R$,
    $$
    |\algo G| = |\algo H| - |\algo J| \geq \frac{|\algo I|\gamma}{2T(\Si)^2} - \frac{10 |\algo I| \gamma^2}{cT(\Si)^2} \geq \frac{\epsilon\gamma N}{4T(\Si)^2} \left(1 - \frac{5 \gamma^2}{c}\right),
    $$
    given that $\gamma$ is a sufficiently small positive constant.
      \end{proof}

    \begin{protocol}[Quantum Random Access Code For Inverting Permutations] \label{ptc:QRAC-perm}\ \\
    Let $c,\gamma \in (0,1)$ be parameters.
    Consider the following (variable-length) quantum random-access code given by $\QRACVL = (\Enc,\Dec)$ defined as follows:
    \begin{description}
    \item[$\bullet$] $\Enc(\pi^{-1}; R)$: On input $\pi^{-1} \in \algo S_N$ and randomness $R \in \bit^*$, first
    use $R$ to extract random coins $r$ and then proceed as follows:
    \begin{description}
    \item[Case 1:] $\pi \notin \algo X$ or $|\algo G| < \frac{\epsilon\gamma N}{4 \,T(\Si)^2} \left(1 - \frac{5 \gamma^2}{c}\right)$.
    Use the classical flag $\texttt{case}=1$ (taking one additional bit) and output the entire permutation table of $\pi^{-1}$.
    \item[Case 2:] $|\algo G| \geq \frac{\epsilon\gamma N}{4 \,T(\Si)^2} \left(1 - \frac{5 \gamma^2}{c}\right)$.
    Use the classical flag $\texttt{case}=2$ (taking one additional bit) and output the following
    \begin{enumerate}
     \item The size of $\algo G$, encoded using $\log N$ bits;
        \item the set $\algo G \subseteq \algo R$, encoded using $\log\binom{|\algo R|}{|\algo G|} $ bits;
        \item The permutation $\pi$ restricted to inputs outside of $\algo G$, encoded using \\ $\log(N!/|\algo G|!)$ bits;
        \item Quantum advice used by the algorithm repeated $\rho$ times with $\alpha^{\otimes \rho}$, for $\alpha \leftarrow \Si_0(\pi, r)$ for some $\rho$ that we will decide later. (We can compute this as the encoder can preprocess multiple copies of the same advice. Note that this
    is the only part of our encoding that is not classical.)
    \end{enumerate}
    \end{description}
    \item[$\bullet$] $\Dec(\beta, y; R)$: On input encoding $\beta$, image $y \in [N]$ and randomness $R \in \bit^*$, first
    use $R$ to extract random coins $r$ and then proceed as follows:
    \begin{description}
    \item[Case 1:] This corresponds to the flag $\texttt{case}=1$.
    Search the permutation table for $\pi^{-1}$ and output $x$ such that $\pi^{-1}(y)=x$.
    \item[Case 2:] This corresponds to the flag $\texttt{case}=2$.
    Recover $\algo G$ and $\pi(x)$ for every $x \notin \algo G$. If $y = \pi(x)$ for some $x \notin \algo G$, output $x = \pi^{-1}(y)$. Otherwise, parse $\alpha_1,\alpha_2,\dots,\alpha_\rho$ and run $\Si_1^{\bar{\pi}_{\bot y}}(\alpha_i,y,r)$ for each $i \in [\rho]$ and output their majority vote, where we let \footnote{The (reversible) quantum oracle implementation for $\bar{\pi}_{\bot y}$ is provided in \Cref{sec:oracle-for-pi-bar}.}
    $$
    \bar{\pi}_{\bot y}(w,a) = \begin{cases}
    y & \text{ if } w \in \algo G \,\land\, a=0\\
    \pi(w) & \text{ if } w \notin \algo G \,\land\, a=0\\
    \pi^{-1}(w) & \text{ if } w \notin \pi(\algo G)  \, \land \, a=1\\
    \bot & \text{ if } w \in \pi(\algo G) \, \land \, a=1.
    \end{cases}
    $$
    \end{description}
    \end{description}
    \end{protocol}
    Let us now analyze the performance of our \QRACVL scheme $(\Enc,\Dec)$ in \expref{Protocol}{ptc:QRAC-perm}. 
     Let $\ket{\Psi_{\Opiii}}$ and $\ket{\Psi_{\bar{\pi}_{\bot y}}}$ denote the final states of $\Si_1$ when it is given the oracles $\Opiii$ and $\bar{\pi}_{\bot y}$, respectively. By \expref{Lemma}{lem:swapping} and the properties of the total query magnitude:
    \begin{align*}
    \|\ket{\Psi_{\Opiii}} -  \ket{\Psi_{\bar{\pi}_{\bot y}}} \| &\leq \sqrt{T(\Si) \cdot q(\algo \Si_1^{\Opiii}, \algo G\setminus \{x\} \times \{0\}) \,\cup \, (
        \pi(\algo G) \setminus \{\pi(x)\} \times \{1\}) }\\
    &\leq \sqrt{T(\Si)  \cdot q(\algo \Si_1^{\Opiii},\Sigma_0^{\algo R} \cup \Sigma_1^{\algo R}) } \\
        &\leq \sqrt{T(\Si) \cdot \frac{c}{T(\Si)}} = \sqrt{c}.
    \end{align*}

    Since $x \in \algo I$, it follows from the definition of $\algo I$ that measuring $\ket{\Psi_{{\pi}_{\bot y}}}$ results in $x$ with probability at least $2/3$. Given a small enough positive constant $c$, we can ensure that measuring $\ket{\Psi_{\bar{\pi}_{\bot y}}}$ will result in $x$ with probability at least $0.6$.
    We now examine the length of our encoding. With probability $1- \epsilon/2$, we have $\pi \notin \algo X$; with probability $\epsilon(1-0.8)/2$, we have $\pi \in \algo X$ but $\algo G$ is small, i.e.,
    $$
    |\algo G| < \frac{\epsilon\gamma N}{4 \,T(\Si)^2} \left(1 - \frac{5 \gamma^2}{c}\right).
    $$
    Therefore, except with probability $1- 0.4\epsilon$, our encoding will result in the flag $\texttt{case}=1$, where the encoding consists of $1 + \log N!$ classical bits and the decoder succeeds with probability $1$.
    With probability $0.4\epsilon$ , our encoding has the flag $\texttt{case}=2$, and the size equals
    $$
    1 + \log N + \log\binom{|\algo R|}{|\algo G|} + \log(N! / |\algo G|!) + \rho S(\Si).
    $$
    By the assumption that $T(\Si)= o(\epsilon \sqrt{N})$, we have
    \begin{align*}
        \log\binom{|\algo R|}{|\algo G|} &= \log \left(\frac{|\algo R|(|\algo R|-1)\ldots (|\algo R|-|\algo G| +1)}{|\algo G|(|\algo G|-1)\ldots 1}\right)\\
        & = O\left(\log \left(\frac{|\algo R||\algo R|\ldots |\algo R|}{|\algo G||\algo G|\ldots |\algo G|}\right)\right)\\
        & = O(|\algo G|\log(|\algo R|/|\algo G|))\\
        &= O(|\algo G|\log 1/\epsilon) \\
        &= o(|\algo G| \log |\algo G|),
    \end{align*}
    and we can rewrite the size of the encoding as
    $$
    \log N + o(|\algo G|\log |\algo G|) + \log N! - \log |\algo G|!  + \rho S(\Si).
    $$
    
    In the case when the decoder is queried on an input that is already known, that is $y \notin \pi(\algo G)$ (which occurs with probability $1 - |\algo G|/N$), the decoder recovers the correct pre-image with probability $1$. Otherwise, the analysis is the following: with just one copy of the advice, the decoder recovers the correct pre-image with probability $2/3$, and hence with $\rho$ many copies, the decoder can take the majority vote and recover the correct pre-image with probability $1 - \exp(-\Omega(\rho))$. The latter follows from the Chernoff bound in \Cref{lma:chernoff}. 
    Overall, the average encoding length is 
    $$
    0.4\epsilon \cdot (\log N + o(|\algo G|\log |\algo G|) - \log |\algo G|!  + \rho S(\Si)) + \log N!$$ 
    
    where 
    the average success probability is $1 - |\algo G|/N \cdot \exp(-\Omega(\rho))$. By setting $\rho = \Omega(\log(N/\epsilon)) = \Omega(\log N)$, the
    average success probability amounts to
    $1 - O(1/N^2)$. Therefore, using the lower bound in \Cref{cor:lower-bound-QRAC-VL}, we have 
    \begin{align*}
    \log N! + 0.4\epsilon \cdot (\log N + o(|\algo G|\log |\algo G|) - \log |\algo G|!  + \rho S(\Si))  &\geq \log N! - O\left(\frac{1}{N}\log N\right)\\
    \log N + o(|\algo G|\log |\algo G|) - \log |\algo G|!  + \rho S(\Si)  &\geq - O\left(\log N\right)\\
    \rho S(\Si) + O\left(\log N\right)  &\geq \log |\algo G|! - o(|\algo G|\log |\algo G|) \\
    S(\Si)\log N  & \geq \Omega (\log |\algo G|! - o(|\algo G|\log |\algo G|) )
    \end{align*}
    where the second and the last equality comes from the fact that $\epsilon = \omega(1/N)$ and $\rho = \Omega(\log N)$, respectively. 
    Since $\log |\algo G|! = O (|\algo G|\log|\algo G|)$, it follows that
    \begin{align*}
        S(\Si)\log N  & \geq \Omega (O(|\algo G| \log |\algo G|) - o(|\algo G|\log |\algo G|) )\\
        S(\Si)\log N  & \geq \Omega (|\algo G| \log |\algo G|).
    \end{align*}
    
    As we are conditioning on the event that $\algo G$ is large,
    plugging in the lower bound on $|\algo G|$, we have that, for sufficiently large $N$, $S(\Si) \geq \widetilde{\Omega}  (|\algo G|)$, and thus
    \begin{align*}
        S(\Si) \cdot T(\Si)^2 &\geq \widetilde{\Omega}(\epsilon N).
    \end{align*}
    This gives the desired space-time trade-off. 
      \end{proof}

We remark that the search inverter we consider in \Cref{thm:lower bounds for perm inversion} succeeds on more than just a constant number of inputs, that is $\epsilon = \omega(1/N)$, and beats the time complexity of $T = \Omega(\sqrt{\epsilon N})$ which is required for unstructured search using Grover's algorithm.~\cite{grover1996fast, dohotaru2008exact,zhandry2019record}. 
Next, we remove the restriction on the inverter by applying amplification (specifically, \expref{Corollary}{lem:amplification}.) This yields a lower bound in the full average-case version of the search inversion problem.
 
\begin{theorem}\label{thm:lower bounds for perm inversion2}
Let $\Si$ be a $(S, T, \epsilon)$-\spi for some $\epsilon > 0$.
Suppose that $\epsilon = \omega(1/N)$, $T= o(\epsilon^2 \sqrt{N})$, and $S\geq 1.$
Then, for sufficiently large $N$ we have
    $$
    S(\Si) \cdot T(\Si)^2 \geq \widetilde{\Omega}(\epsilon^3 N).
    $$
\end{theorem}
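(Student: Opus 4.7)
The plan is to bootstrap \Cref{thm:lower bounds for perm inversion} using the amplification result \Cref{lem:amplification}: convert the generic $(S,T,\epsilon)$-\spi $\Si$ into a restricted inverter whose success-fraction is a constant, then invoke the restricted lower bound and re-express the resulting $(S',T')$ trade-off in terms of $(S,T)$.

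More precisely, I would first apply \Cref{lem:amplification} with $\ell=\lceil\ln(10)/\epsilon\rceil$ to obtain an \spi $\Si[\ell]$ that uses
\[
S' \,=\, \left\lceil\tfrac{\ln 10}{\epsilon}\right\rceil\cdot S \,=\, O(S/\epsilon),
\qquad
T' \,=\, \left\lceil\tfrac{\ln 10}{\epsilon}\right\rceil\cdot (T+1) \,=\, O(T/\epsilon),
\]
and satisfies
\[
\Pr_{\pi,y}\!\left[\Pr_r\!\left[\pi^{-1}(y)\leftarrow \Si[\ell]_1^{\Opiii}(\rho,y,r):\rho\leftarrow \Si[\ell]_0(\pi,r)\right]\ge \tfrac{2}{3}\right]\ge \tfrac{1}{5}.
\]
Thus $\Si[\ell]$ is exactly a restricted inverter in the sense of \Cref{thm:lower bounds for perm inversion} with restricted fraction $\epsilon_{\mathrm{rest}}=1/5$, and (by averaging) it is an $(S',T',2/15)$-\spi.

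Next I would check that the hypotheses of \Cref{thm:lower bounds for perm inversion} are met for $\Si[\ell]$. The condition $\epsilon_{\mathrm{rest}}=1/5=\omega(1/N)$ is trivial, and $S'\ge 1$ is immediate. For the query bound, we need $T'=o(\epsilon_{\mathrm{rest}}\sqrt{N})=o(\sqrt{N})$; since $T'=O(T/\epsilon)$ and our hypothesis is $T=o(\epsilon^{2}\sqrt{N})$, we indeed have $T'=o(\epsilon\sqrt{N})=o(\sqrt{N})$, so the hypothesis holds with room to spare.

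Applying \Cref{thm:lower bounds for perm inversion} to $\Si[\ell]$ therefore yields
\[
S'\cdot (T')^{2} \,\ge\, \widetilde{\Omega}(\epsilon_{\mathrm{rest}}\, N) \,=\, \widetilde{\Omega}(N).
\]
Substituting $S'=O(S/\epsilon)$ and $T'=O(T/\epsilon)$ gives
\[
\frac{S\cdot T^{2}}{\epsilon^{3}} \,\ge\, \widetilde{\Omega}(N),
\]
i.e.\ $S\cdot T^{2}\ge \widetilde{\Omega}(\epsilon^{3} N)$, as required. There is no real obstacle here beyond verifying the parameter bookkeeping; the whole content of the argument is already in \Cref{lem:amplification} (which upgrades the average-case guarantee to a restricted one with only a $1/\epsilon$ overhead in both $S$ and $T$) and \Cref{thm:lower bounds for perm inversion} (which does the hard QRAC-based work for restricted inverters). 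The only thing to watch is that the stronger hypothesis $T=o(\epsilon^{2}\sqrt{N})$ in this theorem (as opposed to $T=o(\epsilon\sqrt{N})$) is exactly what is needed to ensure that the \emph{amplified} query count $T'=O(T/\epsilon)$ still satisfies the $o(\sqrt{N})$ bound demanded by \Cref{thm:lower bounds for perm inversion}.
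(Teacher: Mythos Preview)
Your proposal is correct and follows essentially the same approach as the paper: amplify via \Cref{lem:amplification} to obtain a restricted inverter with constant success fraction, then apply \Cref{thm:lower bounds for perm inversion} and unwind the $O(1/\epsilon)$ blowups in $S$ and $T$ to get the $\epsilon^3$ loss. Your explicit observation that the hypothesis $T=o(\epsilon^{2}\sqrt{N})$ is precisely what is needed so that $T'=O(T/\epsilon)=o(\sqrt{N})$ satisfies the precondition of \Cref{thm:lower bounds for perm inversion} is exactly the right sanity check.
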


\begin{proof}
    Let $\Si = (\Si_0, \Si_1)$ be an $(S,T,\epsilon)$-\spi , for some $\epsilon > 0$. Using \expref{Corollary}{lem:amplification}, we can construct an \spi $\Si[\ell] = (\Si[\ell]_0, \Si[\ell]_1)$ with space and time complexities 
    $$
    S(\Si[\ell]) = \left\lceil\frac{\ln(10)}{\epsilon}\right\rceil \cdot S(\Si) \quad \text{ and } \quad T(\Si[\ell]) = \left(\left\lceil\frac{\ln(10)}{\epsilon}\right\rceil +1 \right) \cdot T(\Si)
    $$
    such that
    $$
    \Pr_{\pi,y} \left[\Pr_r\left[\pi^{-1}(y) \leftarrow\Si[\ell]_1^{\Opiii}(\Si[\ell]_0(\pi, r), y, r)\right] \geq \frac{2}{3} \right] \geq \frac{1}{5} .
    $$
    From \expref{Theorem}{thm:lower bounds for perm inversion} it follows that for sufficiently large $N \geq 1$,
    $$
    S(\Si[\ell]) \cdot T(\Si[\ell])^2 \geq \widetilde{\Omega}(N).
    $$
    Plugging in the expressions for $S(\Si[\ell])$ and $T(\Si[\ell])$, we get that with assumption $$\epsilon = \omega(1/N),\,\,\,\,\quad T(\Si)= o(\epsilon^2 \sqrt{N}) \quad \text{ and } \quad S(\Si)\geq 1,$$
    the trade-off between space and time complexities is $$S(\Si) \cdot T(\Si)^2 \geq \widetilde{\Omega}(\epsilon^3 N).$$
    \end{proof}

Note that we incur a loss ($\epsilon^3$ versus $\epsilon$) in our search lower bound due to the fact that we need to amplify the \emph{restricted} search inverter in \Cref{thm:lower bounds for perm inversion}. This results in a multiplicative overhead of $\Theta(1/\epsilon)$ in terms of space and time complexity, as compared to the restricted inverter.  We remark that a similar loss as a result of amplification is also inherent in~\cite{hhan2019quantum}. 

\subsection{Decision version}\label{decision-lower-bound}

\subsubsection{Space-time tradeoff, no adaptive sampling}

The search lower bound of \expref{Theorem}{thm:lower bounds for perm inversion2}, when combined with the search-to-decision reduction of \expref{Theorem}{thm: search-to-decision reduction}, yields a lower bound for the decision version.
\begin{corollary} \label{coro: decision permutation inverters with advice}
Let $\Di$ be a $(S, T, \delta)$-\dpi for some $\delta > 0$.
Suppose that $\delta = \omega(1/N)$ and $T= \Tilde{o}\left(\delta^2  \sqrt{N}\right)$  and $S\geq 1.$ Then, for sufficiently large $N$ we have
    $$
    S(\Di)\cdot T(\Di)^2 \gtrapprox \widetilde{\Omega}\left(\delta^6 N\right). 
    $$
\end{corollary}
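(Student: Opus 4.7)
The plan is to obtain the decision lower bound as a corollary of the search lower bound in \Cref{thm:lower bounds for perm inversion2} by way of the search-to-decision reduction in \Cref{thm: search-to-decision reduction}. Given a $(S,T,\delta)$-\dpi $\Di$, I would first invoke \Cref{thm: search-to-decision reduction} with a parameter $\ell$ chosen just large enough so that the resulting search inverter succeeds with constant probability. Concretely, taking $\ell = \Theta\!\left(\log n / \delta^2\right)$ (where $n = \lceil \log N\rceil$) makes the failure term $n \cdot \exp(-\delta^2/(1+2\delta) \cdot \ell)$ at most, say, $1/2$, producing a $(S', T', \epsilon')$-\spi with
\[
S' = n\ell S, \qquad T' = n\ell T, \qquad \epsilon' \geq \tfrac{1}{2}.
\]

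Next I would apply \Cref{thm:lower bounds for perm inversion2} to this search inverter. Before doing so, I need to verify the hypotheses: $\epsilon' = \Omega(1) = \omega(1/N)$ is immediate; $S' \geq 1$ is clear; and the running-time constraint $T' = o(\epsilon'^{\,2}\sqrt{N}) = o(\sqrt{N})$ follows from the assumption $T = \tilde{o}(\delta^2 \sqrt{N})$, since $T' = n\ell T = \tilde{O}(T/\delta^2) = \tilde{o}(\sqrt{N})$ — here the tildes on both sides swallow the polylogarithmic overhead introduced by $n$ and $\ell$. The search lower bound then yields
\[
S' \cdot (T')^2 \;\geq\; \widetilde{\Omega}(\epsilon'^{\,3} N) \;=\; \widetilde{\Omega}(N).
\]

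Finally I would translate back to $S, T$. Substituting $S' = n\ell S$ and $T' = n\ell T$ gives
\[
n^3 \ell^3 \cdot S T^2 \;\geq\; \widetilde{\Omega}(N),
\]
so $S T^2 \geq \widetilde{\Omega}(N / (n^3 \ell^3)) = \widetilde{\Omega}(\delta^6 N)$, since $n = \log N$ and $\ell = \Theta(\log n / \delta^2)$ contribute only polylogarithmic (and $\delta^{-6}$) factors, the former of which is absorbed into the $\widetilde{\Omega}$ notation.

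No serious obstacle is anticipated, as both ingredients are already developed in the paper. The only delicate point is bookkeeping the parameter conversions carefully — in particular choosing $\ell$ large enough to drive the decision error below a constant but not so large that the time constraint of \Cref{thm:lower bounds for perm inversion2} fails — and confirming that the $\delta^{-2}$ blow-up in queries from the reduction is exactly what accounts for the gap between the $\epsilon^3$ exponent in the search bound and the $\delta^6$ exponent in the decision bound.
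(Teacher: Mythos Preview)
Your proposal is correct and follows essentially the same approach as the paper: apply the search-to-decision reduction (\Cref{thm: search-to-decision reduction}) with an appropriate $\ell$, then invoke the search lower bound (\Cref{thm:lower bounds for perm inversion2}) and back-substitute. The only minor difference is that the paper takes $\ell = \Omega(n(1+2\delta)/\delta^2)$ to drive the search success probability to $1 - \mathsf{negl}(n)$, whereas you take the slightly smaller $\ell = \Theta(\log n/\delta^2)$ to reach a constant $\epsilon' \geq 1/2$; both choices are polylogarithmic overhead times $1/\delta^2$ and hence yield the same $\widetilde{\Omega}(\delta^6 N)$ bound after absorbing logs.
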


\begin{proof}
  Let $N=2^n$. Given a $(S(\Di),T(\Di),\delta)$-\dpi $=(\Di_0, \Di_1)$ where $\Di_0$ outputs $S$-qubit state and $\Di_1$ makes $T$ queries, one can construct an $
  (S(\Si),T(\Si),\eta)$-\spi $=(\Si_0, \Si_1)$ by \expref{Theorem}{thm: search-to-decision reduction} with $\eta \geq 1-\textsf{negl}(n)$, and with space and time complexities $$
  S(\Si) = n \ell S(\Di) \quad \text{ and } \quad T(\Si) = n \ell T(\Di)
  $$
  where $\ell = \Omega\left(\frac{n(1+2\delta)}{\delta^2}\right)$.
  It directly follows from \expref{Theorem}{thm:lower bounds for perm inversion2} that with conditions 
  \begin{align*}
      \delta &= \omega(1/N), \qquad S(\Di)\geq 1,\\
      T(\Di)&= \frac{1}{n \ell} \cdot o(\eta \sqrt{N}) =  o\left(\frac{\delta^2}{n^2 (1+2\delta)}  \sqrt{N}\right) = \Tilde{o}\left(\delta^2 \sqrt{N}\right), 
  \end{align*}

  $\Si$ satisfies the space-time trade-off lower bound
      \begin{align*}
          n^3 \left(\frac{n(1+2\delta)}{\delta^2}\right)^3 S(\Di) \cdot  T(\Di)^2 &\geq \widetilde{\Omega}(\eta^3 N) \approx \widetilde{\Omega}(N)\\
          S(\Di) \cdot  T(\Di)^2 &\gtrapprox \widetilde{\Omega}\left(\delta^6 N\right)
      \end{align*}
      
  for sufficiently large $N$.
\end{proof}

Similar to the search lower bound from before, we incur a loss that amounts to a factor $\delta^6$. This results from our specific approach which is based on the search-to-decision reduction in \expref{Theorem}{thm: search-to-decision reduction}. We believe that our lower bound could potentially be improved even further.

\subsubsection{Time lower bound, adaptive sampling}

In the case of an adaptive decision inverter without advice, we can get a tight bound by means of the reduction from the unique search problem (\expref{Theorem}{thm:Nayak}), combined with well-known lower bounds on the average-case unique search problem.

\begin{theorem}\label{thm: decision permutation inverters without advice}
Let $\Di$ be a $(0, T, \delta)$-\adpi. Then $T^2 \geq \Omega(\delta N/M).$
\end{theorem}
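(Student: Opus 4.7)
The plan is to plug the $(0,T,\delta)$-\adpi into the reduction of \Cref{thm:Nayak} to obtain a quantum algorithm $\algo B$ that solves $\textsf{UNIQUESEARCH}_{n-m-1}$ using at most $2T$ queries, with distributional error $(p_0,p_1)=(\tfrac12-\delta,\tfrac12)$. Note that the domain of this unique search instance has size $K=2^{n-m-1}=N/(2M)$, so any lower bound for $\textsf{UNIQUESEARCH}$ on a domain of size $K$ immediately translates into a lower bound on $T$.

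Next, I would invoke the standard quantum query lower bound for the (average-case) unique search problem with distributional error. The relevant form, which dates back to Bennett--Bernstein--Brassard--Vazirani and was sharpened in the distributional-error setting by Nayak and others (see, e.g., the discussion in \cite{nayak2010inverting}), states that any quantum algorithm solving $\textsf{UNIQUESEARCH}$ on a domain of size $K$ with distributional error $(p_0,p_1)$ satisfying $p_0+p_1<1$ must make
\[
\Omega\!\left(\sqrt{K\,(1-p_0-p_1)}\right)
\]
queries. In our setting $1-p_0-p_1 = 1-(\tfrac12-\delta)-\tfrac12 = \delta$, so combining with $K=N/(2M)$ yields $2T=\Omega(\sqrt{\delta N/M})$, i.e.\ $T^2 = \Omega(\delta N/M)$, which is the claimed bound.

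The only conceptual step is matching the distributional-error parameters coming out of the reduction to the hypothesis of the unique search lower bound; this is essentially bookkeeping once the right version of the search lower bound is in hand. The main subtlety to watch for is that the bound must be stated in a form that is sensitive to the total bias $1-p_0-p_1$ rather than to the individual error probabilities (since $p_1=\tfrac12$ here means the algorithm is effectively useless on YES-instances), and that the relevant domain is $2^{n-m-1}$ rather than $2^n$, accounting for both the adaptive string $\mu$ of length $m$ and the first bit that encodes the preimage's leading bit. Once these are tracked carefully, the rest is a one-line calculation and no further technical machinery is needed.
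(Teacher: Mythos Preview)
Your proposal is correct and follows the same approach as the paper: apply \Cref{thm:Nayak} to obtain a $2T$-query algorithm for $\textsf{UNIQUESEARCH}_{n-m-1}$ with distributional error $(\tfrac12-\delta,\tfrac12)$, then invoke the average-case unique-search lower bound. The only cosmetic difference is that the paper first averages the two errors to get overall success probability $\tfrac12+\tfrac{\delta}{2}$ and then applies the bound in the form $\text{success}\le \tfrac12+O(q^2/K)$, which is equivalent to the distributional-error form $q=\Omega(\sqrt{K(1-p_0-p_1)})$ you cite.
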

\begin{proof}
  Since $\Di$ is a $(0, T, \delta)$-\dpi, 
  by the lower bound of unique search problem~\cite{grover1996fast,zalka1999grover, nayak2010inverting,zhandry2019record}, we get a $2T$-query algorithm for $\textsf{UNIQUESEARCH}_{n-1}$ with distributional error ($\frac{1}{2}-\delta$, $\frac{1}{2}$). 
  Since the YES and NO cases are uniformly distributed, we can write the overall error probability as 
  $\frac{1}{2}\left(\frac{1}{2}-\delta\right)+\frac{1}{2}\cdot\frac{1}{2}=\frac{1}{2}-\frac{\delta}{2}$.
  Then by 
  the lower bound of unique search, we have 
\begin{align*}
  1-\left(\frac{1}{2}-\frac{\delta}{2}\right) &\leq \frac{1}{2} + O\left(\frac{(2T)^2}{2^{n-m}}\right)\\
  T^2 &\geq \Omega(\delta \cdot 2^{n-m})\\
  T^2 &\geq \Omega\left(\frac{\delta N}{M}\right).
\end{align*}

We note that with non-adaptive $\Di$, i.e. $m=0$, the above bound reduces to query lower bound $T^2 \geq \Omega \left(\delta N\right)$.
  \end{proof}    

\section{Applications} \label{sec:applications}

In this section, we give a plausible security model for symmetric-key encryption and a scheme whose security in that model is based on the hardness of our adaptive two-sided permutation inversion problem. Recall that a symmetric-key encryption scheme consists of three algorithms:
\begin{itemize}
\item (key generation) $\Gen$: given randomness $s$ and security parameter $n$; outputs key $k:=\Gen(1^n; s)$;
\item (encryption) $\Enc$: given key $k$, plaintext $m$, and randomness $r$; outputs ciphertext $c := \Enc_k(m;r)$;
\item (decryption) $\Dec$: given key $k$, ciphertext $c$; outputs plaintext $m := \Dec_k(c)$.
\end{itemize}
When the key randomness is to be selected uniformly, we suppress it and simply write $\Gen(1^n)$.

Consider the following security definition.

\begin{definition}{($\sf OW$-$\sf QCCRA2$)}\label{def: OW-QCCRA2}
Let $\mathsf{SKE}=(\Gen, \Enc, \Dec)$ be a private-key encryption scheme. We say that $\mathsf{SKE}$ is $\sf OW$-$\sf QCCRA2$ if the advantage for any quantum polynomial-time adversary $\A$ in the following  \textsf{OW-QCCRA2} experiment  is at most negligible:
\begin{enumerate}
    \item A key $k$ is generated by running $\Gen(1^n;s)$;
    \item $\A$ gets quantum oracle access to $\Enc_k(\,\cdot\,;\,\cdot\,)$ and $\Dec_k(\cdot)$, and then outputs a $(m-1)$-bit string $\mu$ and a quantum state $\rho$ with size $S$. Let $t(n)$ be the number of quantum queries that $\A$ makes in this phase. 
    \item Uniform $b \in \{0,1\}$ and $r \in \{0,1\}^{n-1}$ are chosen, and a challenge ciphertext $c=\Enc_k(b\|\mu;r)$ is computed and given to $\A$;
    \item $\A$ gets quantum oracle access to $\Enc_k(\,\cdot\,;\,\cdot\,)$ and $\Dec_k^{\bot c}(\cdot)$, and eventually outputs a bit $b'$. Let $\ell(n)$ be the number of quantum queries that $\algo A$ makes in this phase. 
    \item The experiment outputs $1$, if $b'=b$, and $0$ otherwise.
\end{enumerate}   
\end{definition}
\ignore{
\begin{definition} \label{def: new-name}{(\textsf{new-name})} Let $\mathsf{SKE}=(\Gen, \Enc, \Dec)$ be a private-key encryption scheme. We say that $\mathsf{SKE}$ is \textsf{new-name}-secure if the advantage for any quantum polynomial-time adversary $\A$ in the following experiment is at most negligible:
\begin{enumerate}
    \item A key $k$ is generated by running $\Gen(1^n)$;
    \item $\A$ gets limited quantum oracle access to $\Enc_k(\,\cdot\,;\,\cdot\,)$ and $\Dec_k(\cdot)$; $\A$ then outputs $(m-1)$-bit string $\mu$. \ks{T(n) queries in the first phase} \ks{no memory being kept here}
    \item Uniform $b \in \{0,1\}$ and $r \in \{0,1\}^{n-1}$ are chosen, and a challenge ciphertext $c=\Enc_k(b\|\mu;r)$ is computed and given to $\A$;
    \item $\A$ gets quantum oracle access to $\Enc_k(\,\cdot\,;\,\cdot\,)$ and $\Dec_k^{\bot c}(\cdot)$. Eventually, it outputs a bit $b'$. \ks{L(n) queries in the first phase}
    \item The experiment outputs $1$, if $b'=b$, and $0$ otherwise.
\end{enumerate}
\end{definition}
}
We remark that, unlike in most definitions of security, here the adversary is allowed to choose both inputs to the encryption oracle: the plaintext as well as the randomness. To generate the challenge ciphertext, the coin $r$ needs to be chosen truly randomly; otherwise, the scheme will degenerate into a deterministic one that cannot be secure. Moreover, we do not yet make any restriction on the computational power of $\algo A$, or on the functions $t$ and $\ell$.

Next, we define two simple encryption schemes.

\medskip\noindent \textbf{RP Scheme.} Consider the following (inefficient) scheme that uses uniformly random permutations.
\begin{itemize}
    \item \Gen is given $1^n$ and outputs a description $k$ of a uniformly random permutation $\pi$ on $\bool^{2n}$;
    \item \Enc is given $k$, $m \in \bool^n$ and $r \in \bool^n$, and outputs $c:=\pi(m||r)$;
    \item \Dec is given $k$ and $c \in \bool^{2n}$, and outputs the first $n$ bits of $\pi^{-1}(c)$. 
\end{itemize}

\begin{definition}{($\epsilon$-Qsecure PRP)\cite{katz2020introduction,zhandry2016note}}\label{def: qPRP}
Let $P_k: \bool^\lambda \times \bool^n \rightarrow \bool^n$ be a permutation family. We call $P_k$ a $\epsilon$-Qsecure PRP if for any efficient quantum adversary $\algo A$ who makes $q$ quantum queries, there exist a negligible function $\epsilon(\lambda)$ such that
\begin{align*}
\left|\operatorname{Pr}\left[\algo A^{P_k(\cdot),P_k^{-1}(\cdot)}\left(1^n\right)=1\right]-\operatorname{Pr}\left[\algo A^{\pi(\cdot),\pi^{-1}(\cdot)}\left(1^n\right)=1\right]\right| \leq \epsilon\cdot \poly(q)\,,
\end{align*}
where $\pi: \bool^n \rightarrow \bool^n$ is a truly random permutation.
\end{definition}

\medskip\noindent \textbf{PRP Scheme.} Let $\{P_k: \bool^{2n} \mapsto \bool^{2n}\}$ be a family of  $\epsilon$-Qsecure PRPs and consider the following scheme:
\begin{itemize}
    \item \Gen takes as input a security parameter $1^n$ and returns a key $k \in \bool^n$ for $P_k$;
    \item \Enc is given key $k \in \bool^n$, $m\in \bool^n$ and $r \in \bool^n$, and outputs $c:=P_k(m||r)$;
    \item \Dec is given key $k \in \bool^n$ and $c \in \bool^{2n}$, and outputs the first $n$ bits of $P_k^{-1}(c)$. 
\end{itemize}

Of course, any practical scheme should be efficient, and indeed we can show that the PRP scheme is \textsf{OW-QCCRA2} in two special cases: when there is no advice, i.e., $S=0$ (we call this \textsf{OW-QCCRA2-v1})  and when there is no adaptivity, i.e., $|\mu|=0$ (we call this \textsf{OW-QCCRA2-v2}). We are able to prove the following theorems.

\begin{theorem} \label{thm: prp-newname-v1}
The \emph{PRP} scheme is $\sf OW$-$\sf QCCRA2$-$\sf v1$. In other words, for any quantum 
adversary $\mathcal{A}$ who makes $t(n)$ quantum queries in the pre-challenge phase and $\ell(n)$ quantum queries in the post-challenge phase, it holds that
    \begin{align*}
        \Pr[\mathsf{Exp}^{\mathsf{OW\mbox{-}QCCRA2\mbox{-}v1} }_{\A,\emph{PRP}}(1^n)=1] \leq \frac{1}{2}+\delta+ \epsilon \cdot T(n).
    \end{align*}
Here, $\delta \leq O(\frac{\ell^22^{n-1}}{2^{2n}})$, $T(n)=t(n)+\ell(n)$ and $\epsilon$ is a negligible function. 
\end{theorem}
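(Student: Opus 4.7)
The plan is a two-step argument. First, I would use the $\epsilon$-Qsecure PRP property to switch the keyed permutation $P_k$ for a uniformly random permutation $\pi$ on $\bool^{2n}$. Since the $\sf OW$-$\sf QCCRA2$-$\sf v1$ game gives the adversary oracle access only to $\Enc_k$ (which equals $P_k$ on the concatenation of plaintext and adversary-supplied randomness) and to $\Dec_k$ (which returns the first $n$ bits of $P_k^{-1}$), the whole game can be simulated with $q$ forward/inverse queries to $P_k$ whenever $\A$ makes $q$ queries. Hence switching $P_k$ for $\pi$ changes $\A$'s winning probability by at most $\epsilon\cdot \poly(T(n))$, absorbed into the $\epsilon\cdot T(n)$ term. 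It therefore suffices to bound the success probability in the idealized game (the RP scheme) by $1/2 + \delta$.

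Second, I would reduce this idealized game to the adaptive decision permutation inversion problem and invoke \Cref{thm: decision permutation inverters without advice}. The subtle point is an alignment of the challenge distributions: in the game, the challenge is $c = \pi(b\|\mu\|r)$ with $b$, $r$ uniform and $\mu\in\{0,1\}^{n-1}$ chosen by $\A$, while the \adpi challenge is of the form $\pi'(x\|\mu)$ with $x$ uniform. To bridge these, fix the bit-reordering $\sigma$ on $\bool^{2n}$ that sends $b\|r\|\mu\mapsto b\|\mu\|r$ and, given an \adpi instance $\pi'$, set $\pi := \pi'\circ\sigma^{-1}$. Then $\pi'(x\|\mu) = \pi(\sigma(x\|\mu))$: writing $x = b\|r$ with $b$ one bit and $r\in\{0,1\}^n$, this equals $\pi(b\|\mu\|r)$, which is exactly the challenge distribution in the RP game. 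Moreover, the first bit of $\pi'^{-1}(y)$ is precisely $b$, the bit $\A$ must guess.

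Concretely, I would construct $B=(B_0,B_1)$ as follows. $B_0$ takes the truth table of $\pi'$, derives $\pi$ via $\sigma$, simulates $\A$'s pre-challenge phase by answering $\Enc$-queries with the unitary for $\pi$ and $\Dec$-queries by truncating the $\pi^{-1}$-unitary to the first $n$ bits, and then outputs the string $\mu$ that $\A$ returns (with empty quantum advice, matching $S=0$). $B_1$ receives the challenge image $y$ together with oracles $\Opi$ and $\Opii$ for $\pi'$; it simulates $\A$'s post-challenge oracles by composing these oracles with the fixed reshuffling $\sigma$ (and truncating to $n$ bits in the inverse direction), hands $y$ to $\A$ as the ciphertext, and outputs $\A$'s guess. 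Each $\A$-query becomes one $B_1$-query, so $B$ is a $(0,\ell(n),\delta)$-\adpi for $N=2^{2n}$ and adaptive length $m = n-1$, i.e.\ $M = 2^{n-1}$. Applying \Cref{thm: decision permutation inverters without advice} gives
\[
\ell(n)^2 \;\geq\; \Omega\!\left(\frac{\delta\, N}{M}\right) \;=\; \Omega(\delta\cdot 2^{n+1}),
\]
so $\delta \leq O\!\left(\ell(n)^2\cdot 2^{n-1}/2^{2n}\right)$. Combining with the PRP-swap error yields the stated bound.

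The main obstacle is the bit-reshuffling/alignment step: one must verify that the challenge distribution in the RP game truly matches that of an \adpi instance for some explicit transformation of $\pi$, and that the post-challenge decryption oracle $\Dec_k^{\bot c}$ can be faithfully simulated from the restricted inverse oracle $\Opii$ (which only blocks the single input $y$) by truncation. Everything else---the hybrid over the PRP and the invocation of \Cref{thm: decision permutation inverters without advice}---is essentially mechanical once this alignment is in place.
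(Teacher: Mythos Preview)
Your proposal is correct and follows essentially the same two-step approach as the paper: first swap the PRP for a random permutation using the $\epsilon$-Qsecure PRP assumption, then build a $(0,\ell(n),\delta)$-\adpi from the RP-game adversary and invoke \Cref{thm: decision permutation inverters without advice} with $N=2^{2n}$ and $M=2^{n-1}$. Your treatment is in fact slightly more careful than the paper's: you explicitly handle the bit-reordering $\sigma$ needed to align the game's challenge $\pi(b\|\mu\|r)$ with the \adpi format $\pi'(x\|\mu)$, whereas the paper writes the challenge directly as $\pi(b\|\mu\|r)$ inside its \adpi construction and leaves this reshuffling implicit. One minor point: simulating the truncated $\Dec$ oracle from the full-inverse \adpi oracle may cost two queries per call (compute/uncompute), so ``each $\A$-query becomes one $B_1$-query'' is off by a constant factor, but this is absorbed in the asymptotic bound and the paper is equally loose here.
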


\begin{proof} 
  Given an adversary $\algo A$ that attacks the RP scheme in the \textsf{OW-QCCRA2} experiment described in \expref{Definition}{def: OW-QCCRA2} with $S=0$, we can construct a $(0,T,\delta )$-\adpi $\aD = (\aD_0, \aD_1)$ in the decision inversion experiment, which takes place as follows:
\begin{enumerate}
\item \textbf{(sample instance and coins)} a random permutation $\pi: \bit^n \rightarrow \bit^n$ is sampled;
\item \textbf{(prepare advice)} $\aD_0$ is given the whole permutation table of $\pi$. Then it constructs oracles $\Enc(\cdot ; \cdot) = \pi(\cdot \| \cdot)$ and $\Dec(\cdot) = \pi^{-1}(\cdot)$ and gives $\algo A$ quantum oracle access. $\aD_0$ will get back a $(n-1)$-bit output string $\mu$ and then output it. Suppose $\algo A$ makes $t(n)$ quantum queries. 
\item \textbf{(invert)} An instance $c=\pi(b\|\mu\|r)$ is computed, with $b \in \{0,1\}$ and $r \in \{0,1\}^n$ are sampled. $\aD_1$ is run with $c$, auxiliary string $\mu$ and quantum oracle access  $\algo O_{\pi}$ and $\algo O_{\pi_{\bot y}^{-1}}$. It then directly passes $c$ and two oracles to $\algo A$ and gets back a bit $b'$ and outputs it. Suppose $\algo A$ makes $\ell(n)$ quantum queries. 
\item \textbf{(check)} If $b'=b$, output $1$; otherwise output $0$.
\end{enumerate}
It trivially follows that 
\begin{align*}
  \Pr[\textsf{Exp}^{\textsf{OW-QCCRA2-v1}}_{\A,\text{RP}}(1^n)=1] \leq \Pr[\DecisionInvert_{\aD}=1].
\end{align*}
By assumption we have that, for all efficient quantum adversary $\algo A$, there exists a negligible $\epsilon$ such that 
\begin{align*}
\left|\operatorname{Pr}\left[\algo A^{P_k(\cdot),P_k^{-1}(\cdot)}\left(1^n\right)=1\right]-\operatorname{Pr}\left[\algo A^{\pi(\cdot),\pi^{-1}(\cdot)}\left(1^n\right)=1\right]\right| \leq \epsilon \cdot \poly(t(n)+\ell(n)),
\end{align*}

Therefore 
\begin{align*}
  \Pr[\textsf{Exp}^{\textsf{OW-QCCRA2-v1}}_{\A,\text{PRP}}(1^n)=1] &\leq \Pr[\textsf{Exp}^{\textsf{OW-QCCRA2-v1}}_{\A,\text{RP}}(1^n)=1] +\epsilon \cdot T(n)\\
  &\leq \Pr[\DecisionInvert_{\aD}=1]+\epsilon \cdot T(n)\\
  &= \frac{1}{2} + \delta + \epsilon T(n).
\end{align*}
Where $\delta \leq O(\frac{\ell^22^{n-1}}{2^{2n}})$ by \expref{Theorem}{thm: decision permutation inverters without advice}, and by \expref{Definition}{def: qPRP} $\epsilon$ is negligible. Remark that the above bound becomes $\frac{1}{2}+\negl(n)$ when $\algo A$ is a quantum polynomial time (QPT) adversary since both $\delta$ and $\epsilon T$ are negligible when $t$ and $\ell$ are of polynomial size.

 \end{proof}
\begin{theorem} \label{thm: prp-newname-v2}
The \emph{PRP} scheme is $\sf OW$-$\sf QCCRA2$-$\sf v2$. In other words, for any quantum 
adversary $\mathcal{A}$ who makes $t(n)$ quantum queries in the pre-challenge phase and $\ell(n)$ quantum queries in the post-challenge phase, it holds that
    \begin{align*}
        \Pr[\mathsf{Exp}^{\mathsf{OW\mbox{-}QCCRA2\mbox{-}v1} }_{\A,\emph{PRP}}(1^n)=1] \leq \frac{1}{2}+\delta+ \epsilon \cdot T(n).
    \end{align*}
Here, $\delta \leq O(\frac{\ell^2 S}{2^{2n}})^{\frac{1}{6}}$, $T(n)=t(n)+\ell(n)$ and $\epsilon$ is a negligible function. 
\end{theorem}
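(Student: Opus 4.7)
The plan is to mirror the structure of Theorem~\ref{thm: prp-newname-v1} and reduce to our decision inversion lower bound with advice (\Cref{coro: decision permutation inverters with advice}) instead of the no-advice bound. First I would replace the PRP $P_k$ with a truly random permutation $\pi$ on $\bool^{2n}$ via a hybrid, which costs $\epsilon \cdot T(n)$ by \Cref{def: qPRP}, with $T(n)=t(n)+\ell(n)$ being the total number of queries made by $\algo A$ to the encryption/decryption oracles across both phases. It then remains to bound the success probability of $\algo A$ in the OW-QCCRA2-v2 experiment against the RP scheme.

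Next I would construct an $(S, T', \delta)$-\dpi $\Di=(\Di_0,\Di_1)$ from $\algo A$. The algorithm $\Di_0$ receives the truth table of $\pi$ and simulates the pre-challenge phase of $\algo A$ internally: it answers $\Enc(m;r)$ queries by computing $\pi(m\|r)$ from the truth table, and $\Dec(c)$ queries by computing the first $n$ bits of $\pi^{-1}(c)$. After $t(n)$ such queries, $\algo A$ outputs its $S$-qubit internal state $\rho$ (recall $\mu$ is empty in the $\textsf{v2}$ case), and $\Di_0$ outputs $\rho$ as its advice. In the invert phase, $\Di_1$ receives a challenge $y=\pi(b\|r)$ with uniform $b\in\bool$ and $r\in\bool^n$, together with oracle access to $\algo O_\pi$ and $\algo O_{\pi^{-1}_{\bot y}}$. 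It forwards $c:=y$ to $\algo A$ as the challenge ciphertext and simulates the post-challenge oracles: each $\Enc$ query is implemented with a single call to $\algo O_\pi$, and each $\Dec^{\bot c}$ query is implemented with $O(1)$ calls to $\algo O_{\pi^{-1}_{\bot y}}$ by computing $\pi^{-1}_{\bot y}$ into a fresh ancilla, copying the first $n$ bits, and uncomputing. Finally $\Di_1$ outputs the bit $b'$ returned by $\algo A$. By construction the success probability of $\Di$ in the decision inversion experiment on $\pi$ equals exactly the success probability of $\algo A$ in the OW-QCCRA2-v2 experiment against the RP scheme, and $\Di$ uses $S$ qubits of advice and $T'=O(\ell(n))$ queries.

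Now I would apply \Cref{coro: decision permutation inverters with advice} to the $(S,T',\delta)$-\dpi with $N=2^{2n}$, obtaining $S\cdot (T')^2 \gtrapprox \widetilde{\Omega}(\delta^6\cdot 2^{2n})$ provided $\delta=\omega(1/2^{2n})$ and $T'=\widetilde o(\delta^2\cdot 2^n)$. Rearranging yields
\[
\delta \;\leq\; \widetilde O\!\left(\frac{\ell(n)^2 \, S}{2^{2n}}\right)^{1/6},
\]
as claimed. Composing with the PRP-to-RP replacement step gives the overall bound
$\Pr[\mathsf{Exp}^{\mathsf{OW\mbox{-}QCCRA2\mbox{-}v2}}_{\A,\text{PRP}}(1^n)=1] \leq \frac{1}{2} + \delta + \epsilon\cdot T(n)$.

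The main obstacle I anticipate is the faithful simulation of the restricted decryption oracle $\Dec_k^{\bot c}$ using $\algo O_{\pi^{-1}_{\bot y}}$: the scheme's decryption returns only the first $n$ bits of $\pi^{-1}$, so one must implement the truncation coherently (compute-copy-uncompute) and keep track of the constant-factor blow-up in query count so that the DPI's query bound $T'$ is indeed $O(\ell(n))$. A lesser subtlety is that the DPI framework restricts advice and queries to be functions of $N$ alone, whereas in the cryptographic game they may depend on $n$; this is handled by noting that $N=2^{2n}$ so any dependence on $n$ is equivalently a dependence on $N$. Once these bookkeeping issues are settled, the theorem follows directly from \Cref{coro: decision permutation inverters with advice} combined with the PRP security loss, as in the proof of Theorem~\ref{thm: prp-newname-v1}.
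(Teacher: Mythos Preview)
Your proposal is correct and follows essentially the same approach as the paper's proof: build a $(S,T',\delta)$-\dpi from $\algo A$ by having $\Di_0$ simulate the pre-challenge phase from the truth table and $\Di_1$ simulate the post-challenge phase using $\algo O_\pi$ and $\algo O_{\pi^{-1}_{\bot y}}$, then invoke \Cref{coro: decision permutation inverters with advice} with $N=2^{2n}$ and add the PRP-to-RP hybrid loss $\epsilon\cdot T(n)$. Your extra care about coherently truncating $\pi^{-1}_{\bot y}$ to the first $n$ bits via compute-copy-uncompute (yielding $T'=O(\ell(n))$) is exactly the bookkeeping the paper glosses over, so the argument goes through as you outline.
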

\begin{proof}
Given an adversary $\algo A$ that attacks the RP scheme in the \textsf{OW-QCCRA2} experiment described in \expref{Definition}{def: OW-QCCRA2} with $|\mu|=0$, we can construct a $(S,T,\delta )$-\dpi $\sf D = (\sf D_0, \sf D_1)$ in the decision inversion experiment. The construction is the same as \expref{Theorem}{thm: prp-newname-v1}, with slight modifications at the "prepare advice" and the "invert" step:

\medskip\noindent \textbf{(prepare advice)} $\sf D_0$ is given the whole permutation table of $\pi$. Then it constructs oracles $\Enc(\cdot ; \cdot) = \pi(\cdot \| \cdot)$ and $\Dec(\cdot) = \pi^{-1}(\cdot)$ and gives $\algo A$ quantum oracle access. $\sf D_0$ will get back a $S$-qubit quantum state $\rho$ and then output it. Suppose $\algo A$ makes $t(n)$ quantum queries.

\medskip\noindent \textbf{(invert)} An instance $c=\pi(b||r)$ is computed, with $b \in \{0,1\}$ and $r \in \{0,1\}^n$ are sampled. $\sf D_1$ is run with $c$, quantum advice $\rho$ and quantum oracle access  $\algo O_{\pi}$ and $\algo O_{\pi_{\bot y}^{-1}}$. It then directly passes $c$ and two oracles to $\algo A$ and gets back a bit $b'$ and outputs it. Suppose $\algo A$ makes $\ell(n)$ quantum queries.

By following the same procedure as in \expref{Theorem}{thm: prp-newname-v1} but using the bound of \expref{Corollary}{coro: decision permutation inverters with advice}, we get the desired bound.    
\end{proof}

Finally, we remark that the above results hold for the following strengthening of \textsf{OW-QCCRA2}, described as follows. Suppose that an encryption scheme satisfies the property that there exists an \emph{alternative} decryption algorithm that can both compute the plaintext and also deduce the randomness that was initially used to encrypt. This property is true for the RP and PRP schemes, as well as some other standard encryption methods (e.g., Regev's secret-key \textsf{LWE} scheme, implicit in~\cite{regev2009lattices}). For schemes in this category, one can also grant access to such an alternative decryption algorithm, thus expanding the form of ``randomness access'' that the adversary has. Our proofs show that the RP and PRP schemes are secure (in their respective setting) even against this form of additional adversarial power.

\section{Future Work}

For future applications, the two-sided permutation inversion problem appears naturally in the context of sponge hashing~\cite{guido2011cryptographic} which is used by the international hash function
standard SHA3~\cite{dworkin2015sha}.
Previous work~\cite{czajkowski2018post,czajkowski2021quantum} studied the post-quantum security of the sponge construction where the block function is either a random function or a (non-invertible)
random permutation. However, as the core permutation in SHA3 is public and efficiently invertible, the ``right setting'' of theoretical study is one in which the block function consists of an invertible permutation. This setting is far less understood, and establishing the security of the sponge in this setting is a major open problem in post-quantum cryptography. Our results on two-sided permutation inversion may serve as a stepping stone towards this goal.

\printbibliography
\appendix


\section{Some basic probabilistic lemmas}

In this section we collect a series of known probabilistic results, which we used in our main proofs.

We first record some basic lemmas about the behavior of certain types of random variables.

\begin{lemma} [Multiplicative Chernoff Bound] \label{lma:chernoff}
Let $X_1,\dots,X_n$ be independent random variables taking values in $\bit$. Let $X = \sum_{i \in [n]} X_i$ denote their sum and let $\mu = \mathbb{E}[X]$ denote its expected value. Then for any $\delta > 0$,
$$
\Pr[ X < (1-\delta) \mu] \leq 2e^{-\delta^2\mu/2}.
$$
Specifically, when $X_i$ is a Bernoulli trial and $X$ follows the binomial distribution with $\mu = np$ and $p > \frac{1}{2}$, we have $\Pr[X \leq n/2] \leq e^{-n(p-\frac{1}{2})^2/(2p)}.$

\end{lemma}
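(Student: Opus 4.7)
The plan is to prove the first inequality by the standard Chernoff moment-generating-function method, and then derive the specialization as a direct algebraic consequence of the first bound.

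First I would apply Markov's inequality to the random variable $e^{-tX}$ for a parameter $t>0$ to be chosen later, obtaining
\[
\Pr[X < (1-\delta)\mu] \;=\; \Pr\!\bigl[e^{-tX} > e^{-t(1-\delta)\mu}\bigr] \;\leq\; \frac{\mathbb{E}[e^{-tX}]}{e^{-t(1-\delta)\mu}}.
\]
Using independence of the $X_i$, I would factor $\mathbb{E}[e^{-tX}] = \prod_i \mathbb{E}[e^{-tX_i}]$. Writing $p_i = \Pr[X_i=1]$, each factor satisfies $\mathbb{E}[e^{-tX_i}] = 1 + p_i(e^{-t}-1) \leq \exp\!\bigl(p_i(e^{-t}-1)\bigr)$ by the inequality $1+x\leq e^x$. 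Multiplying these together and using $\sum_i p_i = \mu$ yields $\mathbb{E}[e^{-tX}] \leq \exp\!\bigl(\mu(e^{-t}-1)\bigr)$.

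Next I would optimize the resulting bound by setting $t = \ln\!\bigl(1/(1-\delta)\bigr)$, which produces the classical form
\[
\Pr[X < (1-\delta)\mu] \;\leq\; \left(\frac{e^{-\delta}}{(1-\delta)^{1-\delta}}\right)^{\!\mu}.
\]
To convert this into the clean exponential form stated in the lemma, I would invoke the elementary inequality $(1-\delta)\ln(1-\delta) + \delta \geq \delta^2/2$ for $\delta \in (0,1)$, which follows from the Taylor expansion $\ln(1-\delta) \leq -\delta - \delta^2/2 - \delta^3/3 - \cdots$ after a short calculation. This gives $\Pr[X < (1-\delta)\mu] \leq e^{-\delta^2 \mu/2}$, which in particular implies the stated bound $2e^{-\delta^2\mu/2}$. (For $\delta \geq 1$, the statement is vacuous since the stated upper bound then exceeds $\Pr$ trivially after absorbing into the constant $2$, or one can simply note the event is empty when $\delta>1$.)

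For the specialization, with $X_i$ i.i.d.\ Bernoulli$(p)$, $\mu = np$, and $p > 1/2$, I would choose $\delta$ so that $(1-\delta)\mu = n/2$, i.e.\ $\delta = (2p-1)/(2p) = (p-1/2)/p$. Substituting into the general bound gives
\[
\Pr[X \leq n/2] \;\leq\; \exp\!\left(-\frac{\delta^2\mu}{2}\right) \;=\; \exp\!\left(-\frac{(p-1/2)^2}{p^2}\cdot\frac{np}{2}\right) \;=\; \exp\!\left(-\frac{n(p-1/2)^2}{2p}\right),
\]
as required. The main (minor) obstacle is the elementary inequality $(1-\delta)\ln(1-\delta)+\delta \geq \delta^2/2$, which requires a short but careful Taylor-series estimate; every other step is routine.
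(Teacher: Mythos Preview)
Your proof is correct and follows the standard moment-generating-function approach to Chernoff bounds. Note, however, that the paper does not actually prove this lemma: it is stated in the appendix as a known probabilistic fact and used without proof, so there is no ``paper's own proof'' to compare against. Your write-up would serve perfectly well as a self-contained justification. One tiny gloss worth tightening: the general bound you derive is for the strict event $\{X < (1-\delta)\mu\}$, whereas the specialization concerns $\{X \leq n/2\}$; this is harmless since the same Markov argument with $\Pr[e^{-tX}\geq e^{-ta}]$ covers the non-strict case, but you may want to say so explicitly.
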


\begin{lemma}[Reverse Markov's inequality]\label{lem:Reverse Markov's inequality}
Let $X$ be a random variable taking values in $[0,1]$. Let $\theta \in (0,1)$ be arbitrary. Then, it holds that
$$
\Pr[X \geq \theta] \geq \frac{\mathbb{E}[X] - \theta}{1-\theta}.
$$
\end{lemma}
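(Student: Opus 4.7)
The plan is to run the standard complementary Markov-type argument: split the expectation of $X$ according to whether $X$ lies above or below the threshold $\theta$, bound each piece using the fact that $X$ takes values in $[0,1]$, and then solve for $\Pr[X \geq \theta]$.

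Concretely, first I would write
$$
\mathbb{E}[X] \;=\; \mathbb{E}\bigl[X \cdot \mathbf{1}_{X < \theta}\bigr] + \mathbb{E}\bigl[X \cdot \mathbf{1}_{X \geq \theta}\bigr].
$$
For the first term I would use that $X < \theta$ on the event $\{X < \theta\}$, so this term is at most $\theta \cdot \Pr[X < \theta]$. For the second term I would use the range bound $X \leq 1$, so this term is at most $\Pr[X \geq \theta]$. Combining these and writing $\Pr[X < \theta] = 1 - \Pr[X \geq \theta]$ yields
$$
\mathbb{E}[X] \;\leq\; \theta\bigl(1 - \Pr[X \geq \theta]\bigr) + \Pr[X \geq \theta] \;=\; \theta + (1-\theta)\Pr[X \geq \theta].
$$
Solving for $\Pr[X \geq \theta]$ (using $1 - \theta > 0$, which holds since $\theta \in (0,1)$) gives the claimed inequality.

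There is essentially no obstacle here; the only thing to be careful about is that one really does need $X \in [0,1]$ (to bound the ``large'' part by $1$) and $\theta < 1$ (to divide through by $1-\theta$ without changing the direction of the inequality). Both are part of the hypotheses, so the argument is a one-shot computation with no case analysis required.
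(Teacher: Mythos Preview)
Your proposal is correct and is essentially the same argument as the paper's: the paper proves the pointwise inequality $(1-\theta)\,\mathbb{1}_{[X \geq \theta]} \geq X - \theta$ (checking the two cases $X \geq \theta$ and $X < \theta$) and then takes expectations, which yields exactly your inequality $\mathbb{E}[X] \leq \theta + (1-\theta)\Pr[X \geq \theta]$ before rearranging. The only difference is cosmetic---you bound the two pieces of $\mathbb{E}[X]$ after splitting, while the paper bounds $X$ pointwise and integrates---so nothing substantive distinguishes the two.
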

\begin{proof}
Fix $\theta \in (0,1)$. We first show that
\begin{align}\label{eq:indicator}
(1-\theta) \cdot \id_{[X \geq \theta]}  \geq X - \theta,
\end{align}
where $\id_{[X \geq \theta]}$ is the indicator function for the event that $X \geq \theta$.
Suppose that $X \geq \theta$. Then, Eq.~\eqref{eq:indicator} amounts to $1-\theta \geq X - \theta$, which is satisfied because $X \leq 1$. Now suppose that $X < \theta$. In this case Eq.~\eqref{eq:indicator} amounts to $0 \geq X - \theta$, which is satisfied whenever $X \geq 0$. Taking the expectation over Eq.~\eqref{eq:indicator} and noting that $\mathbb{E}[\id_{[X \geq \theta]}] = \Pr[X \geq \theta]$, we get
\begin{align*}
 (1-\theta) \cdot \Pr[X \geq \theta]  \geq \mathbb{E}[X] - \theta. 
\end{align*}
This proves the claim.
  \end{proof}
\begin{lemma}[Averaging argument]\label{lem:averaging} Let $\algo X$ and $\algo Y$ be any finite sets and let $\Omega: \algo X \times \algo Y \rightarrow \bit$ be a predicate. Suppose that
$\Pr_{x,y}[\Omega(x,y)=1] \geq \epsilon$, for some $\epsilon \in [0,1]$, where $x$ is chosen uniformly at random in $\algo X$. Let $\theta \in (0,1)$. Then, there exists a subset $\algo X_\theta \subseteq \algo X$ of size $|\algo X_\theta| \geq (1-\theta) \cdot\epsilon |\algo X|$ such that
$$
\Pr_{y}[\Omega(x,y)=1] \geq \theta \cdot \epsilon, \quad \forall x \in \algo X_\theta.
$$
\end{lemma}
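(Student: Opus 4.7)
The plan is to reduce the statement to a direct application of the Reverse Markov inequality (Lemma~\ref{lem:Reverse Markov's inequality}) already proved above. First I will define, for each $x \in \algo X$, the conditional success probability
$$
p(x) \;:=\; \Pr_{y}[\Omega(x,y)=1] \;\in\; [0,1].
$$
Viewing $x$ as uniform on $\algo X$, the hypothesis $\Pr_{x,y}[\Omega(x,y)=1] \geq \epsilon$ translates (by the law of total probability) into $\mathbb{E}_x[p(x)] \geq \epsilon$. So the statement I want to prove is simply the claim that a nontrivial $(1-\theta)\epsilon$ fraction of $x$'s satisfy $p(x) \geq \theta\epsilon$.

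Next I will set $\algo X_\theta := \{x \in \algo X : p(x) \geq \theta\epsilon\}$, so that $|\algo X_\theta|/|\algo X| = \Pr_x[p(x) \geq \theta\epsilon]$. The case $\epsilon = 0$ is vacuous, so assume $\epsilon > 0$, which makes $\theta\epsilon \in (0,1)$ and allows me to invoke Lemma~\ref{lem:Reverse Markov's inequality} on the random variable $p(x)$ with threshold $\theta\epsilon$:
$$
\Pr_x[p(x) \geq \theta\epsilon] \;\geq\; \frac{\mathbb{E}_x[p(x)] - \theta\epsilon}{1-\theta\epsilon} \;\geq\; \frac{\epsilon - \theta\epsilon}{1-\theta\epsilon} \;=\; \frac{(1-\theta)\epsilon}{1-\theta\epsilon} \;\geq\; (1-\theta)\epsilon,
$$
where the last step uses $1 - \theta\epsilon \leq 1$. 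Multiplying through by $|\algo X|$ gives the desired size bound $|\algo X_\theta| \geq (1-\theta)\epsilon|\algo X|$, and by definition every $x \in \algo X_\theta$ satisfies $\Pr_y[\Omega(x,y)=1] \geq \theta\epsilon$.

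There is essentially no obstacle: this is a textbook averaging argument, and the only genuine content is the choice of threshold $\theta\epsilon$ (rather than, say, $\theta$) so that it matches both the desired conclusion $\Pr_y[\Omega(x,y)=1]\geq \theta\epsilon$ and the numerator of the reverse Markov bound. If instead one wishes to avoid invoking the reverse Markov lemma, an equivalent self-contained argument just partitions the sum $\sum_x p(x)$ into contributions from $\algo X_\theta$ (each at most $1$) and its complement (each less than $\theta\epsilon$) and derives the same inequality from $\mathbb{E}_x[p(x)] \geq \epsilon$; I would opt for the Reverse-Markov route since the lemma is right at hand.
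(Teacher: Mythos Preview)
Your proof is correct and essentially identical to the paper's own argument: define $p(x)=\Pr_y[\Omega(x,y)=1]$, observe $\mathbb{E}_x[p(x)]\geq\epsilon$, set $\algo X_\theta=\{x:p(x)\geq\theta\epsilon\}$, and apply the Reverse Markov inequality (Lemma~\ref{lem:Reverse Markov's inequality}) with threshold $\theta\epsilon$ to get $|\algo X_\theta|/|\algo X|\geq (1-\theta)\epsilon/(1-\theta\epsilon)\geq(1-\theta)\epsilon$. Your explicit handling of the $\epsilon=0$ edge case is a minor nicety the paper omits, but otherwise the two proofs coincide.
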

\begin{proof}
Define $p_x = \Pr_{y}[\Omega(x,y)=1]$, for $x \in \algo X$. Then, for $\epsilon \in [0,1]$, we have
$$
\mathbb{E}_x[p_x] = \Pr_{x,y}[\Omega(x,y)=1] = |\algo X|^{-1}\sum_{x \in \algo X} \Pr_{y}[\Omega(x,y) =1] \geq \epsilon. 
$$
Fix $\theta \in (0,1)$. Because the weighted average above is at least $\epsilon$, there must exist a subset $\algo X_\theta$ such that
$$
p_x = \Pr_{y}[\Omega(x,y)=1] \geq \theta \cdot \epsilon, \quad \forall x \in \algo X_\theta.
$$
Recall that $x$ is chosen uniformly at random in $\algo X$. Using the reverse Markov's inequality, 
it follows that
$$
\frac{|\algo X_\theta|}{|\algo X|} =
\Pr[p_x \geq \theta \cdot \epsilon] \geq \frac{\mathbb{E}[p_x] - \theta \cdot \epsilon}{1-\theta \cdot \epsilon} \geq \frac{\epsilon \cdot (1- \theta)}{1-\theta \cdot \epsilon} > \epsilon \cdot (1 - \theta).
$$
In other words, the subset $\algo X_\theta \subseteq \algo X$ is of size at least $|\algo X_\theta| \geq (1-\theta) \cdot\epsilon |\algo X|$.
  \end{proof}

\section{Amplification proofs} \label{app:amplification}

\subsection{Quantum oracle construction in \expref{Protocol}{ptc:eps-SPI repetition}} \label{app:search}
In \expref{Protocol}{ptc:eps-SPI repetition} step $2(c)$, $\Si[\ell]_1$, with quantum oracle access to $\Opi, \Opii$, needs to grant $\Si_1$ quantum oracle access to $(\sigma_{1,i}\circ \pi \circ \sigma_{2,i})_{\bot \sigma_{1,i}(y)}$, which is a simplified notation of $\algo O_{\sigma_{1,i}\circ \pi \circ \sigma_{2,i}}$ and $\algo O_{(\sigma_{1,i}\circ \pi \circ \sigma_{2,i})^{-1}_{\bot \sigma_{1,i}(y)}}$. Here we give detailed constructions of these two oracles:
\begin{itemize}
        \item Whenever the algorithm $\Si_1$ queries the oracle $\algo O_{\sigma_{1,i}\circ \pi \circ \sigma_{2,i}}$ on $\ket{w}_1\ket{z}_2$, $\Si[\ell]_1$ performs the following reversible operations
        \begin{align*}
            & \ket{w}_1\ket{z}_2\\
             \xrightarrow[]{\text{add \textsf{aux} registers}}&\ket{w}_1\ket{z}_2\ket{0}_{\mathsf{aux1}}\ket{0}_{\mathsf{aux2}}\\
             \xrightarrow[]{\algo O_{\sigma_{2,i},1,\mathsf{aux2}}} &\ket{w}_1\ket{z}_2\ket{0}_{\mathsf{aux1}}\ket{\sigma_{2,i}(w)}_{\mathsf{aux2}}\\
             \xrightarrow[]{\algo O_{\pi,\mathsf{aux2},\mathsf{aux1}}} &\ket{w}_1\ket{z}_2\ket{\pi \circ \sigma_{2,i}(w)}_{\mathsf{aux1}}\ket{\sigma_{2,i}(w)}_{\mathsf{aux2}}\\
             \xrightarrow[]{\algo O_{\sigma_{1,i}, \mathsf{aux1},2}} &\ket{w}_1\ket{z\oplus \sigma_{1,i}\circ \pi \circ \sigma_{2,i}(w)}_2\ket{\pi \circ \sigma_{2,i}(w)}_{\mathsf{aux1}}\ket{\sigma_{2,i}(w)}_{\mathsf{aux2}}\\
             \xrightarrow[]{\algo O_{\pi,\mathsf{aux2},\mathsf{aux1}}} &\ket{w}_1\ket{z\oplus \sigma_{1,i}\circ \pi \circ \sigma_{2,i}(w)}_2\ket{0}_{\mathsf{aux1}}\ket{\sigma_{2,i}(w)}_{\mathsf{aux2}}\\
             \xrightarrow[]{\algo O_{\sigma_{2,i},1,\mathsf{aux2}}} &\ket{w}_1\ket{z\oplus \sigma_{1,i}\circ \pi \circ \sigma_{2,i}(w)}_2\ket{0}_{\mathsf{aux1}}\ket{0}_{\mathsf{aux2}}\\
             \xrightarrow[]{\text{drop } \mathsf{aux}} &\ket{w}_1\ket{z\oplus \sigma_{1,i}\circ \pi \circ \sigma_{2,i}(w)}_2.
        \end{align*}
        Then, $\Si[\ell]_1$  sends  the final state back to $\Si_1$.
        
        \item Whenever $\Si_1$ queries the oracle $\algo O_{(\sigma_{1,i}\circ \pi \circ \sigma_{2,i})^{-1}_{\bot \sigma_{1,i}(y)}}$ on $\ket{w}_1\ket{z}_2$, the algorithm $\Si[\ell]_1$ performs the following reversible operations:
        \begin{align*}
            & \ket{w}_1\ket{z}_2\\
             \xrightarrow[]{\text{add \textsf{aux} register}}&\ket{w}_1\ket{z}_2\ket{0}_{\mathsf{aux1}}\ket{0}_{\mathsf{aux2}}\\
             \xrightarrow[]{\algo O_{\sigma_{1,i,*}^{-1},1,\mathsf{aux1}}} &\ket{w}_1\ket{z}_2\ket{\sigma_{1,i,*}^{-1}(w)}_{\mathsf{aux1}}\ket{0}_{\mathsf{aux2}}\\
             \xrightarrow[]{\algo O_{\pi_{\bot y}^{-1},\mathsf{aux1},\mathsf{aux2}}} &\ket{w}_1\ket{z}_2\ket{\sigma_{1,i,*}^{-1}(w)}_{\mathsf{aux1}}\ket{\pi^{-1}_{\bot y} \circ \sigma_{1,i,*}^{-1}(w)}_{\mathsf{aux2}}\\
             \xrightarrow[]{\algo O_{\sigma_{2,i,*}^{-1},2,\mathsf{aux2}}} &\ket{w}_1\ket{z\oplus \sigma_{2,i,*}^{-1} \circ \pi^{-1}_{\bot y} \circ \sigma_{1,i,*}^{-1}(w) }_2\ket{\sigma_{1,i,*}^{-1}(w)}_{\mathsf{aux1}}\ket{\pi^{-1}_{\bot y} \circ \sigma_{1,i,*}^{-1}(w)}_{\mathsf{aux2}}\\
             \xrightarrow[]{\algo O_{\pi_{\bot y}^{-1},\mathsf{aux1},\mathsf{aux2}}} &\ket{w}_1\ket{z\oplus \sigma_{2,i,*}^{-1} \circ \pi^{-1}_{\bot y} \circ \sigma_{1,i,*}^{-1}(w) }_2\ket{\sigma_{1,i,*}^{-1}(w)}_{\mathsf{aux1}}\ket{0}_{\mathsf{aux2}}\\
             \xrightarrow[]{\algo O_{\sigma_{1,i,*}^{-1},1,\mathsf{aux1}}} &\ket{w}_1\ket{z\oplus \sigma_{2,i,*}^{-1} \circ \pi^{-1}_{\bot y} \circ \sigma_{1,i,*}^{-1}(w) }_2\ket{0}_{\mathsf{aux1}}\ket{0}_{\mathsf{aux2}}\\
             \xrightarrow[]{\text{drop } \mathsf{aux}} &\ket{w}_1\ket{z\oplus \sigma_{2,i,*}^{-1} \circ \pi^{-1}_{\bot y} \circ \sigma_{1,i,*}^{-1}(w) }_2.
        \end{align*}
        where $\sigma^{-1}_{\cdot,i,*}: [N] \times \bit \rightarrow [N] \times \bit$ is given below 
        \begin{align*}
            \sigma^{-1}_{\cdot,i,*} (w \| b) := \sigma_{\cdot,i}^{-1}(w) \| b.
        \end{align*}
        Then, $\Si[\ell]_1$ sends the final state back to $\Si_1$. 
        \end{itemize}

\subsection{Another amplification lemma proof} \label{app:amp3}

\begin{replemma}{lem:amplification}
Let $\Si = (\Si_0, \Si_1)$ be an $\epsilon$-\spi with space and time complexity given by $S(\Si)$ and $T(\Si)$, respectively, for some $\epsilon >0$. Then, we can construct an \spi $\Si[\ell] = (\Si[\ell]_0, \Si[\ell]_1)$ with space and time complexities
$$
S(\Si[\ell]) = \left\lceil\frac{\ln(10)}{\epsilon}\right\rceil \cdot S(\Si) \quad \text{ and } \quad T(\Si[\ell]) = \left\lceil\frac{\ln(10)}{\epsilon}\right\rceil \cdot (T(\Si)+1)
$$
such that
$$
\Pr_{\pi,y} \left[\Pr_r\left[\pi^{-1}(y) \leftarrow \Si[\ell]_1^{\Opiii}(\rho, y, r): \rho \leftarrow \Si[\ell]_0(\pi, r) \right] \geq \frac{2}{3} \right] \geq \frac{1}{5}.
$$
    
\end{replemma}

\begin{proof}
Let $\ell = \left\lceil\frac{\ln(10)}{\epsilon}\right\rceil$. Using \expref{Lemma}{lemma:amplify-S}, we can construct an $\ell$-time repetition of $\Si$ $(\eta)$-$\spi$, denoted by $\Si[\ell] = (\Si[\ell]_0,\Si[\ell]_1)$, with $\eta =  1-(1-\epsilon)^{\ell}$ and space and time complexities
$
S(\Si[\ell]) = \ell\cdot S(\Si)$ and $T(\Si[\ell]) =  \ell \cdot (T(\Si)+1)$.
In other words,
$$
\Pr_{\pi,y,r}\left[\pi^{-1}(y) \leftarrow \Si[\ell]_1^{\Opiii}(\rho, y, r): \rho \leftarrow \Si[\ell]_0(\pi, r) \right] \geq  1-(1-\epsilon)^{\ell} \geq \frac{9}{10}.
$$
Let $\algo S_N$ denote the set of permutations over $[N]$. From \expref{Lemma}{lem:averaging} it follows that there exists $\theta = 7/9$ and a subset $\algo X_\theta \subseteq \algo S_N \times [N]$ of size at least
$$
|\algo X_\theta| \geq (1-\theta)\cdot \frac{9}{10} \cdot \big|\algo S_N \times [N]\big| = \frac{1}{5} \cdot \big|\algo S_N \times [N] \big|.
$$
such that, for every $(\pi,y) \in \algo X_\theta$, we have
$$
\Pr_r\left[\pi^{-1}(y) \leftarrow\Si[\ell]_1^{\Opiii}(\rho, y, r): \rho \leftarrow \Si[\ell]_0(\pi, r)\right] \geq \theta \cdot \frac{9}{10} > \frac{2}{3}.
$$
Because $|\algo X_\theta| \cdot |\algo S_N \times [N]|^{-1} \geq  \frac{1}{5}$, it follows that
$$
\Pr_{\pi,y} \left[\Pr_r\left[\pi^{-1}(y) \leftarrow \Si[\ell]_1^{\Opiii}(\rho, y, r): \rho \leftarrow \Si[\ell]_0(\pi, r)\right] \geq \frac{2}{3} \right] \geq \frac{1}{5}.
$$
This proves the claim.
\end{proof}

\subsection{Decision amplification proof}\label{app:decision}

Same as the search amplification, we amplify the success probability of a $\delta$-\dpi through $\ell$-time repetition defined in \expref{Protocol}{ptc:eps-DPI repetition}.
\begin{protocol}[$\ell$-time repetition of \textsf{$\delta$-\dpi}] \label{ptc:eps-DPI repetition}
Given a \textsf{$\delta$-\dpi} $\Di = (\Di_0, \Di_1)$, the construction of an "$\ell$-time serial repetition of $\Di$" $\Di[\ell] = (\Di[\ell]_0, \Di[\ell]_1)$ is as follows:

\begin{enumerate}
   \item \textit{(Advice Preparation)} the algorithm $\Di[\ell]_0$ proceeds as follows:
    \begin{enumerate}
        \item $\Di[\ell]_0$ receives as input a random permutation $\pi: [N]\rightarrow [N]$ and randomness $r \leftarrow \bit^{*}$ and parses the string $r$ into $2\ell$ substrings, i.e. $r=r_{0}\Vert...\Vert r_{\ell-1}\Vert r_{\ell}\Vert...\Vert r_{2\ell-1}$ (the length is clear in context).
        \item $\Di[\ell]_0$ uses $r_{0},...,r_{\ell-1}$ to generate $\ell$ permutation pairs $\{\sigma_{1,i}, \sigma_{2,i}\}_{i=0}^{\ell-1}$ in $\Perms_N$, where $\sigma_{1,i}$ is a random permutation, $\sigma_{2,i}$ has the following form
        \begin{equation}\label{eqn:decision case}
        \sigma_{2,i} (x_1,...,x_n)= (x_1 \oplus r_i^*,x_2,...,x_n),
        \end{equation} where $r_i^*$ is some random bit generated from $r_i$ for all $i \in [0, \ell-1]$.
        Then runs $\Di_0(\sigma_{1,i}\circ \pi \circ \sigma_{2,i} , r_{i+\ell})$ to get a quantum state $\rho_i := \rho_{\sigma_{1,i}\circ \pi \circ \sigma_{2,i}, r_{i+\ell}}$ for all $i \in [0,\ell-1]$. Finally, $\Di[\ell]_0$ outputs a quantum state $\bigotimes_{i=0}^{\ell-1}\rho_i$.
    \end{enumerate}
    \item \textit{(Oracle Algorithm)} $\Di[\ell]_1^{\Opiii}$ is an oracle algorithm that proceeds as follows:
    \begin{enumerate}
    \item $\Di[\ell]_1$ receives $\bigotimes_{i=0}^{\ell-1}\rho_i$, randomness $r $ and an image $y \in [N]$ as input. 
    \item $\Di[\ell]_1$ parses $r = r_{0}\Vert...\Vert r_{\ell-1}\Vert r_{\ell}\Vert...\Vert r_{2\ell-1}$ and uses the coins $r_{0}\Vert...\Vert r_{\ell-1}$ to generate $\ell$ different permutation pairs $\{\sigma_{1,i}, \sigma_{2,i}\}_{i=0}^{\ell-1}$ in $\Perms_N$ as shown above.
    \item  $\Di[\ell]_1$ then runs the following routine for all $i\in[0,\ell-1]$:
    \begin{enumerate}
    \item Run $\Di_1$ with oracle access to $(\sigma_{1,i}\circ \pi \circ \sigma_{2,i})_{\bot \sigma_{1,i}(y)}$, which implements the permutation $\sigma_{1,i}\circ \pi \circ \sigma_{2,i}$ and its inverse (but $\bot$ at $\sigma_{1,i}(y)$).
      \item Get back $b_i \leftarrow \Di_1^{(\sigma_{1,i}\circ \pi \circ \sigma_{2,i})_{\bot \sigma_{1,i}(y)}}(\rho_i,\sigma_{1,i}(y),r_{i+\ell})$. 
    \end{enumerate}
      \item  $\Di[\ell]_1$ pads $b_i$ with all zero string of size $n-1$ and computes $b_i^*=\sigma_{2,i}(b_i\|0^{n-1})|_0$ for all $i \in [0,\ell-1]$, then outputs $b^*$ which is the majority vote of $\{b_0^*,\ldots,b_{\ell-1}^*\}$.
    \end{enumerate}
\end{enumerate}
\end{protocol}

\begin{replemma}{lemma:amplify-D}
 Let $(\Di_0, \Di_1)$ be a $\delta$-\dpi, where $\Di_0$ outputs an $S$-qubit state and $\Di_1$ makes $T$ queries. Then, we can construct an $\ell$-time repetition of $\Di$, denoted by $\Di[\ell] = (\Di[\ell]_0,\Di[\ell]_1)$, which is an \textsf{$\eta$-\dpi} for $\eta \geq \frac{1}{2}- \exp(-\frac{\delta^2}{(1+2\delta)} \cdot \ell)$, and has space and time complexities given by
$$
S(\Di[\ell]) = \ell \cdot S(\Di) \quad \text{ and } \quad T(\Di[\ell]) = \ell \cdot T(\Di).
$$
\end{replemma}

\begin{proof}
Let $(\Di_0, \Di_1)$ be a $\delta$-\dpi for some $\delta >0$, where $\Di_0$ outputs an $S$-qubit state and $\Di_1$ makes $T$ queries.
Similarly as in \expref{Lemma}{lemma:amplify-S}, we consider the execution of the $\ell$-time repetition of \textsf{$\delta$-DPI}, denoted by \textsf{DPI} $\Di[\ell]$, which we define in \expref{Protocol}{ptc:eps-DPI repetition}. For each iteration $i \in [0,\ell-1]$, we have
\begin{align*}
&\Pr[b_i = \pi^{-1}(y)|_0] \\
&=\Pr \left[ (\bar{\pi})^{-1}(\sigma_{1,i}(y)) |_0 \leftarrow
\Di_1^{(\bar{\pi})_{\bot \sigma_{1,i}(y)}}\big(\rho_i,\sigma_{1,i}(y),r_{i+\ell} \big) : \rho_i \leftarrow \Di_0(\bar{\pi}, r_{i+\ell})
\right] \\
&\equiv \Pr \left[ ((\sigma_{2,i})^{-1} \circ \pi^{-1}(y))|_0 \leftarrow
\Di_1^{\Opiii} \big(\rho_{\pi \circ \sigma_{2,i}, r_{i+ \ell}},y,r_{i+\ell} \big) :\rho_{\pi \circ \sigma_{2,i}, r_{i+ \ell}} \leftarrow \Di_0(\pi \circ \sigma_{2,i}, r_{i+\ell})
\right] \,\, \\
&\geq \,\, \frac{1}{2} + \delta, 
\end{align*}
where $\bar{\pi}= \sigma_{1,i}\circ\pi \circ \sigma_{2,i}$. The probability is taken over $\pi \leftarrow \Perms_N$, $r \leftarrow \bit^*$ (which is used to sample permutations $\sigma_i$) and $x \leftarrow [N]$, along with all internal measurements of $\Di$. 

Recall that $b_i \leftarrow \Di_1^{(\sigma_{1,i}\circ \pi \circ \sigma_{2,i})_{\bot \sigma_{1,i}(y)}}(\rho_i,\sigma_{1,i}(y),r_{i+\ell})$, for $i \in [\ell]$. Let $X_i$ be the indicator variable for the event that $b_i = (\pi\circ \sigma_{2,i})^{-1}(y)|_0$. Similar to the search case, we argue that all $X_i$ are mutually independent. For any $i \in [0, \ell-1]$ and any subset $K \subset [0,l-1]$ where $i \notin K$, let

\begin{align*}
    \textsf{Event } A &= \{X_i=0\} \\
      &= \{b_i \neq \big((\sigma_{2,i})^{-1} \circ \pi^{-1}(y)\big)|_0\} \\
      &= \{b_i||0^{n-1}|_0 \neq \big((\sigma_{2,i})^{-1} \circ \pi^{-1}(y)\big)|_0 \} \\
      &= \{\big(\sigma_{2,i} \circ (b_i||0^{n-1})\big)|_0 \neq \pi^{-1}(y)|_0 \},
\end{align*}

Note that the last equality holds because of \expref{Equation}{eqn:decision case}. We then define another event
\begin{align*}
    \textsf{Event } B &= \bigcap_{j\in K}\{ X_j=0 \} \\
    &= \bigcap_{j\in K}\{\big(\sigma_{2,j}\circ (b_j||0^{n-1})\big)|_0 \neq \pi^{-1}(y)|_0 \}
 \end{align*}

Given that $B$ happens, we have $\{b_j\}_{j \in K}$ such that for all $j\in K$, $\big(\sigma_{2,j}\circ (b_i||0^{n-1})\big)|_0 \neq \pi^{-1}(y)|_0$. We now consider the probability that $A$ happens. In \expref{Equation}{eqn:decision case}, since all $r_i^*$ are independently randomly generated, the value of $\big(\sigma_{2,i} \circ (b_i||0^{n-1})\big)|_0$ is independent of all other values of  $\big(\sigma_{2,j}\circ (b_i||0^{n-1})\big)|_0$. Therefore, the event that $\big(\sigma_{2,i} \circ (b_i||0^{n-1})\big)|_0 \neq \pi^{-1}(y)|_0$ is not correlated with all other $\big(\sigma_{2,j}\circ (b_i||0^{n-1})\big)|_0 \neq \pi^{-1}(y)|_0$, i.e., $\Pr[A|B]=\Pr[A]$. This is true for any $i$ and $K$. Same as the search case, in each trial, the base inverter is solving a completely independent permutation inversion problem, thus we conclude that all $\ell$ trails are mutually independent. 

Let $X=\sum_{i=0}^{\ell-1} X_i$, we have that $\mathbb{E}[X] \geq \ell \cdot (\frac{1}{2} + \delta)$ by the linearity of expectation. Note that $\Di[\ell]$ succeeds in $\DecisionInvert$ if and only if $\Di[\ell]_1$ can output $b^* = \pi^{-1}(y)|_0$, i.e. $X > \frac{\ell}{2}$ in which case more than half of the elements in $\{b_0,...,b_{\ell-1}\}$ are equal to $\pi^{-1}(y)|_0$. By the multiplicative Chernoff bound in \Cref{lma:chernoff}, the probability that $\DecisionInvert$ fails is at most
$$
 \mathrm{Pr}\left[X < \frac{\ell}{2}\right] \leq \exp\left(-\frac{\delta^2}{(1+2\delta)} \cdot \ell \right).
$$

Note that the resource requirements needed for the amplification procedure amount to space and time complexities $\ell S$ and $\ell T$, respectively, similar as in \Cref{lemma:amplify-S}.
\end{proof}

\section{Quantum oracle constructions in \expref{Theorem}{thm:Nayak}}\label{app:reduction_Nayak}

In \expref{Theorem}{thm:Nayak}, $\B$, with quantum oracle access to $f$, needs to grant $\A$ quantum oracle access to $h_{f,\pi,t,\mu}$ and $h^{-1*}_{f,\pi,t,\mu}$. Here we give detailed constructions of $\algo O_{h_{f,\pi,t,\mu}}$ and $\algo O_{h^{-1*}_{f,\pi,t,\mu}}$. Note that $\pi$ is sampled by $\B$ and so it is easy for it to construct quantum oracles $\Opi$ and $\algo O_{\pi^{-1}_{\bot t}}$. Since $h_{f,\pi,t,\mu}^{-1*} =\pi^{-1}_{\bot t}$, the partial inverse oracle $O_{h^{-1*}_{f,\pi,t,\mu}}$ can be simply simulated by $\algo O_{\pi^{-1}_{\bot t}}$. So we only need to show how to construct $\algo O_{h_{f,\pi,t,\mu}}$. 

Let $x=x_0\dots x_{n-1}$, where $n = \log N$. When $\pi \in \pi_{t,0,\mu}$, the function becomes
\begin{align*}
    h_{f, \pi,t,\mu}(x_0\dots x_{n-1})&=\left(x_0 \cdot f(x_1...x_{n-m-1})\cdot \mathds{1}(x_{n-m}...x_{n}=\mu) \right)\cdot t \\
    &+ \overline{ \left( x_0 \cdot f(x_1...x_{n-m-1})\cdot \mathds{1}(x_{n-m}...x_{n}=\mu)  \right)}\cdot \pi(x).
\end{align*}
Then define a function $g: [N]\rightarrow \{0,1\}$, such that $g(x)=x_0 \cdot f(x_1...x_{n-m-1})\cdot \mathds{1}(x_{n-m}...x_{n}=\mu) $. With access to $\algo O_f$, it is easy to construct $\algo O_g$ by applying $\algo O_f$ to the last $n-1$ bits followed by an AND gate.

Now when $\A$ queries the oracle $\algo O_{h_{f,\pi,t,\mu}}$ on $\ket{x}\ket{y}$, $\B$ performs the following reversible operations
\begin{align*}
& \ket{x}\ket{y}\\
\xrightarrow[]{\text{add \textsf{aux} registers}}&\ket{x}_1\ket{y}_2\ket{0}_3\ket{0}_4\ket{0^n}_5\ket{0^n}_6\\
\xrightarrow[]{\algo O_{g,1,3} X_4 \algo O_{1,4} \algo O_{\pi,1,5}U_t}&\ket{x}\ket{y}\ket{g(x)}\ket{\overline{g(x)}}\ket{\pi(x)}\ket{t}\\
\xrightarrow[]{\text{CCNOT}_{3,6,2}} &\ket{x}\ket{y\oplus (g(x)\cdot t)}\ket{g(x)}\ket{\overline{g(x)}}\ket{\pi(x)}\ket{t}\\
\xrightarrow[]{\text{CCNOT}_{4,5,2}} &\ket{x}\ket{y\oplus (g(x)\cdot t)\oplus (\overline{g(x)}\cdot \pi(x))}\ket{g(x)}\ket{\overline{g(x)}}\ket{\pi(x)}\ket{t}\\
\xrightarrow[]{\algo O_{g,1,3} X_4 \algo O_{1,4} \algo O_{\pi,1,5}U_t} &\ket{x}\ket{y\oplus (g(x)\cdot t)\oplus (\overline{g(x)}\cdot \pi(x))}\ket{0}\ket{0}\ket{0^n}\ket{0^n}\\
\xrightarrow[]{\text{drop \textsf{aux}}}&\ket{x}\ket{y\oplus (g(x)\cdot t)\oplus (\overline{g(x)}\cdot \pi(x))}
\end{align*}
It is easy to see that $y\oplus (g(x)\cdot t)\oplus (\overline{g(x)}\cdot \pi(x))=y\oplus h_{f,\pi,t,\mu}(x)$. Therefore, to respond to one query to $O_{h_{f,\pi,t,\mu}}$, $\B$ needs to query $\algo O_f$ \textit{twice} (once for computing and once for eliminating). The same thing can be done when $\pi\in \pi_{t,1,\mu}$. 

\section{Quantum oracle constructions in \expref{Protocol}{ptc:QRAC-perm}}\label{sec:oracle-for-pi-bar}

Here, we show how to implement the function $\bar{\pi}_{\bot y}$ by means of a (reversible) quantum oracle. This can be done by two separate oracles $\algo O_{\bar{\pi}}$ and $\algo O_{\bar{\pi}^{-1}_{\bot y}}$, where the corresponding functions are 
$$
\bar{\pi}(w)= \begin{cases}
y & \text{ if } w \in \algo G \\
\pi(w) & \text{ if } w \notin \algo G
\end{cases}
$$
and
$$
\bar{\pi}^{-1}_{\bot y}(w,b) = \begin{cases}
\pi^{-1}(w)||0 & \text{ if } w \notin \pi(\algo G)  \, \land \, b=0\\
1||1 & \text{ if } w \in \pi(\algo G) \, \land \, b=1.
\end{cases}
$$

Let $f$ be an indicator function on whether $w \in \mathcal{G}$. Given $\beta$ as an input, the permutation $\pi$ restricted to inputs outside of $\mathcal{G}$ is known (denoted as $\pi'$). Therefore given input $y$, with quantum oracle access to $\algo O_f$ and $\algo O_{\pi'}$, we can easily construct $\algo O_{\bar{\pi}}$ and $\algo O_{\bar{\pi}^{-1}_{\bot y}}$. 

The following procedure gives a construction of $\algo O_{\bar{\pi}}$.
\begin{align*}
    & \ket{w}\ket{z}\\
    \xrightarrow[]{\text{add \textsf{aux} registers}}&\ket{w}_1\ket{z}_2\ket{0}_3\ket{0}_4\ket{0^n}_5\ket{0^n}_6\\
    \xrightarrow[]{\algo O_{f,1,3} X_4 \algo O_{1,4} \algo O_{\pi',1,5}U_y}&\ket{w}\ket{z}\ket{f(w)}\ket{\overline{f(w)}}\ket{\pi'(w)}\ket{y}\\
    \xrightarrow[]{\text{CCNOT}_{3,6,2}} &\ket{w}\ket{z\oplus (f(w)\cdot y)}\ket{f(x)}\ket{\overline{f(w)}}\ket{\pi'(w)}\ket{t}\\
    \xrightarrow[]{\text{CCNOT}_{4,5,2}} &\ket{x}\ket{z\oplus (f(w)\cdot y)\oplus (\overline{f(w)}\cdot \pi'(w))}\ket{f(w)}\ket{\overline{f(w)}}\ket{\pi'(w)}\ket{y}\\
    \xrightarrow[]{\algo O_{f,1,3} X_4 \algo O_{1,4} \algo O_{\pi',1,5}U_y} &\ket{x}\ket{z\oplus (f(w)\cdot y)\oplus (\overline{f(w)}\cdot \pi'(w))}\ket{0}\ket{0}\ket{0^n}\ket{0^n}\\
    \xrightarrow[]{\text{drop \textsf{aux}}}&\ket{x}\ket{z\oplus (f(w)\cdot y)\oplus (\overline{f(w)}\cdot \pi'(w))}\\
    \equiv&\ket{x}\ket{z\oplus \bar{\pi}(w)}
\end{align*}

The backward oracle $\algo O_{\bar{\pi}^{-1}_{\bot y}}$ would be constructed similarly.

\end{document}